\newcommand{\simon}[1]{}
\newcommand{\jonas}[1]{}
\newcommand{\bernhard}[1]{}
\newtheorem{thm}{Theorem}
\newtheorem{lemma}{Lemma}
\newtheorem{remark}{Remark}
\theoremstyle{definition}
\newtheorem{example}{Example}
\newtheorem{definition}{Definition}
\newcommand{\argmin}{\mathop{\mathrm{argmin}}\limits}
\newcommand{\trace}[1]{\text{Tr}\left[#1 \right]}
\newcommand{\partialtrace}[2]{\text{Tr}_{#1}\left[#2 \right]}
\newcommand{\vect}{\text{Vec}}
\newcommand{\Ind}[2]{I_{#1\leftrightarrow #2}}
\newcommand{\expect}[2]{{\ensuremath{\mathbb{E}_{#1}\left[{#2}\right]}}}
\newcommand{\bs}[1]{\boldsymbol{#1}}
\newcommand{\mc}[1]{\mathcal{#1}}
\renewcommand{\H}{\mathcal{H}}
\newcommand{\R}{\mathbb{R}}
\renewcommand{\d}{\mathrm{d}}
\newcommand*\samethanks[1][\value{footnote}]{\footnotemark[#1]}
\newcommand{\ourtitle}{The Inductive Bias of Quantum Kernels}
\title{\ourtitle}
\author{%
  Jonas M. K\"ubler\thanks{JMK and SB contributed equally and are ordered randomly.} \qquad Simon Buchholz\samethanks \qquad Bernhard Sch\"olkopf \\
  Max Planck Institute for Intelligent Systems\\
  T\"ubingen, Germany\\
  \texttt{\{jmkuebler, sbuchholz, bs\}@tue.mpg.de} \\
}
\begin{document}

\maketitle
\begin{abstract}
It has been hypothesized that quantum computers may lend themselves well to applications in machine learning. In the present work, we analyze function classes defined via \emph{quantum kernels}. Quantum computers offer the possibility to efficiently compute inner products of exponentially large density operators that are classically hard to compute. However, having an exponentially large feature space renders the problem of generalization hard. Furthermore, being able to evaluate inner products in high dimensional spaces efficiently by itself does not guarantee a quantum advantage, as already classically tractable kernels can correspond to high- or infinite-dimensional reproducing kernel Hilbert spaces (RKHS).

We analyze the spectral properties of quantum kernels and find that we can expect an advantage if their RKHS is low dimensional and contains functions that are hard to compute classically. If the target function is known to lie in this class, this implies a quantum advantage, as the quantum computer can encode this \emph{inductive bias}, whereas there is no classically efficient way to constrain the function class in the same way. However, we show that finding suitable quantum kernels is not easy because the kernel evaluation might require exponentially many measurements.

In conclusion, our message is a somewhat sobering one: we conjecture that quantum machine learning models can offer speed-ups only if we manage to encode knowledge about the problem at hand into quantum circuits, while encoding the same bias into a classical model would be hard. These situations may plausibly occur when learning on data generated by a quantum process, however, they appear to be harder to come by for classical datasets.

\end{abstract}

\section{Introduction}\label{sec:intro}
\vglue-1mm
In recent years, much attention has been dedicated to studies of how small and noisy quantum devices \citep{preskill2018quantum} could be used for near term applications to showcase the power of quantum computers. Besides fundamental demonstrations \citep{Arute2019}, potential applications that have been discussed are in quantum chemistry \citep{peruzzo2014VQE}, discrete optimization \citep{farhi2014QAOA} and machine learning (ML) \citep{Mitarai2018, farhi2018classification, havlivcek2019, Schuld2019QKernel, kubler2019QME,liu2020rigorous, schuld2021quantum, Huang2021}. 

Initiated by the seminal HHL algorithm \citep{harrow2009quantum}, early work in quantum machine learning (QML) was focused on speeding up linear algebra subroutines, commonly used in ML, offering the perspective of a runtime logarithmic in the problem size \citep{QSVM,Wiebe2012Data, lloyd2013quantum,  Kerenidis2020}. However, most of these works have an inverse polynomial scaling of the runtime in the error and it was shown rigorously by 
\citet{statisticalLimits} that due to the quantum mechanical measurement process a runtime complexity $O(\sqrt{n})$
is necessary for convergence rate $1/\sqrt{n}$.

Rather than speeding up linear algebra subroutines, we focus on more recent suggestions that use a quantum device to define and implement the function class and do the optimization on a classical computer. There are two ways to that: the first are so-called \emph{Quantum Neural Networks} (QNN) or parametrized quantum circuits \citep{Mitarai2018, farhi2018classification, havlivcek2019} which can be trained via gradient based optimization \citep{Guerreschi2017, Mitarai2018, schuld2019evaluating, stokes2019quantum, sweke2019stochastic, Kubler2020adaptiveoptimizer}. The second approach is to use a predefined way of encoding the data in the quantum system and defining a \emph{quantum kernel} as the inner product of two quantum states \citep{havlivcek2019, Schuld2019QKernel, kubler2019QME, liu2020rigorous, schuld2021quantum}. These two approaches essentially provide a parametric and a non-parametric path to quantum machine learning, which are closely related to each other \citep{schuld2021quantum}. Since the optimization of QNNs is non-convex and suffers from so-called Barren Plateaus \citep{mcclean2018barren}, we here focus on quantum kernels, which allow for convex problems and thus lend themselves more readily to theoretical analysis.

The central idea of using a QML model is that it enables to do computations that are exponentially hard classically. 
However, also in classical ML, kernel methods allow us to implicitly work with high- or infinite dimensional function spaces \citep{SchSmo02, shawe-taylor_cristianini_2004}. 
Thus, purely studying the expressivity of QML models \citep{schuld2021expressive} is not sufficient to understand when we can expect speed-ups. Only recently first steps where taken into this direction \citep{liu2020rigorous, Huang2021, Huang2021InfTh}.
Assuming classical hardness of computing discrete logarithms, \citet{liu2020rigorous} proposed a task based on the computation of the discrete logarithm where the quantum computer, equipped with the right feature mapping, can learn the target function with exponentially less data than any classical (efficient) algorithm.  Similarly, \citet{Huang2021} analyzed generalization bounds and realized that the expressivity of quantum models can hinder generalization. They proposed a heuristic to optimize the labels of a dataset such that it can be learned well by a quantum computer but not a classical machine.

In this work, we relate the discussion of quantum advantages to the classical concept of \emph{inductive bias}. The \emph{no free lunch} theorem informally states that no learning algorithm can outperform other algorithms on all problems. This implies that an algorithm that performs well on one type of problem necessarily performs poorly on other problems. A standard inductive bias in ML is to prefer functions that are continuous. 
An algorithm with that bias, however, will then struggle to learn functions that are discontinuous. 
For a QML model to have an edge over classical ML models, we could thus ensure that it is equipped with an inductive bias that cannot be encoded (efficiently) with a classical machine. If a given dataset fits this inductive bias, the QML model will outperform any classical algorithm.
For kernel methods, the qualitative concept of inductive bias can be formalized by analyzing the spectrum of the kernel and relating it to the target function \citep{WilSmoSch01,SchSmo02,Cristianini2006, cortes12aalignment, Canatar2021, Jacot2020KARE}.

Our main contribution is the analysis of the inductive bias of quantum machine learning models
based on the spectral properties of quantum kernels.
First, we show
that quantum kernel methods will fail to generalize as soon as the data embedding into the quantum Hilbert space is too expressive (Theorem \ref{th:largest_ev}).
Then we note that projecting the quantum kernel appropriately allows to construct inductive biases that are hard to create classically (Figure~\ref{fig:circuits}). 
However, our Theorem \ref{thm:biased_kernels} also implies 
that estimating the biased kernel requires exponential measurements, a phenomenon reminiscent of the Barren plateaus observed in quantum neural networks. Finally we show experiments supporting our main claims.

While our work gives guidance to find a quantum advantage in ML, this yields no recipe for obtaining a quantum advantage on a classical dataset. We conjecture that unless we have a clear idea how the data generating process can be described with a quantum computer, we cannot expect an advantage by using a quantum model in place of a classical machine learning model.

\begin{figure}
\subfigure[]{
\Qcircuit @C=.5em @R=1.em {
&\push{\rule{1em}{0em}} & \lstick{\ket{0}}& \gate{R_X(x_1)} & \multigate{3}{V}  &\measuretab{M} \\
& & \lstick{\ket{0}}& \gate{R_X(x_2)}  &\ghost{V} & \measureD{\mathrm{id}} \\
& & \cdots& \cdots & \nghost{V} &  \cdots\\
& & \lstick{\ket{0}}& \gate{R_X(x_d)} &\ghost{V} & \measureD{\mathrm{id}}\\
}}
\hfill
\subfigure[]{
\Qcircuit @C=.5em @R=1.em {
& \lstick{\ket{0}}& \gate{R_X(x_1)} & \multigate{3}{V} &\qw \\
& \lstick{\ket{0}}& \gate{R_X(x_2)}  &\ghost{V} & \qw\\
& \cdots& \cdots & \nghost{V} & \\
& \lstick{\ket{0}}& \gate{R_X(x_d)} &\ghost{V} &\qw 
 & \push{\rule{1em}{0em}}& \raisebox{7.5em}{$\rho^V(x)$}
 \gategroup{1}{5}{4}{5}{.7em}{\}} 
 }}
 \hfill
 \subfigure[]{
 \Qcircuit @C=.5em @R=1.em {
& \lstick{\ket{0}}& \gate{R_X(x_1)} & \multigate{3}{V}  &\qw &  \raisebox{0em}{$\tilde\rho^V(x)$}&\push{\rule{0.1em}{0em}}\\
& \lstick{\ket{0}}& \gate{R_X(x_2)}  &\ghost{V} &\qw &   \measureD{\mathrm{id}}\\
& \cdots& \cdots & \nghost{V} &  & \cdots \\
& \lstick{\ket{0}}& \gate{R_X(x_d)} &\ghost{V} & \qw& \measureD{\mathrm{id}}}
}
\caption{\textbf{Quantum advantage via inductive bias:} (a) Data generating quantum circuit $f(x) = \trace{\rho^V(x)(M \otimes \mathrm{id})} =\trace{\tilde\rho^V(x)M} $. (b) The full quantum kernel $k(x,x') = \trace{\rho^V(x)\rho^V(x')}$ is too general and cannot learn $f$ efficiently. (c) The biased quantum kernel $q(x,x') = \trace{\tilde{\rho}^V(x)\tilde{\rho}^V(x')}$ meaningfully constrains the function space and allows to learn $f$ with little data. 
}
\label{fig:circuits}
\end{figure}
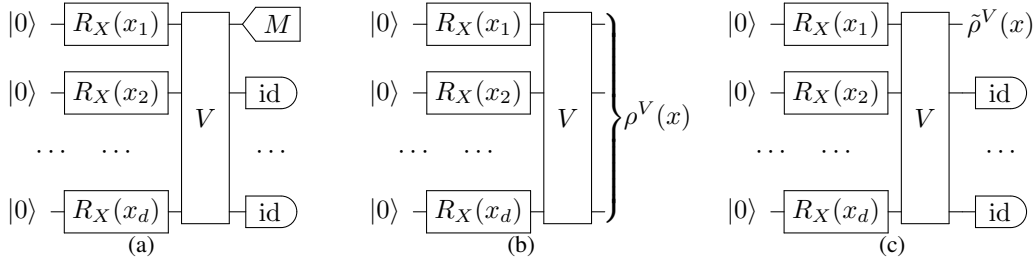

\section{Supervised learning}\label{sec:supervised_learning}

We briefly introduce the setting and notation for supervised learning as a preparation for our analysis of quantum mechanical methods in this context.
 The goal of supervised learning is the estimation of a functional mechanism based on data generated from this mechanism. For concreteness we focus on the regression setting where we assume data is generated 
 according to $Y= f^*(X)+\varepsilon$ where $\varepsilon$ denotes zero-mean  noise. We focus on $X\in \mathcal{X}\subset \mathbb{R}^d$, $Y \in \mathbb{R}$.  We denote the joint probability distribution of $(X,Y)$
 by $\mc{D}$ and we are given $n$ i.i.d.~observations $D_n$ from $\mc{D}$. We will refer to the marginal distribution of $X$ as $\mu$, define the $L_\mu^2$ inner product $\braket{f,g} = \int f(x)g(x) \, \mu(\d x)$ and denote the corresponding norm by $\|\cdot\|$.
 The  least square risk and the empirical risk of some hypothesis $h: \mathcal{X}\to \R$ is defined by
 $R(h)=\expect{\mc{D}}{(h(X)-Y)^2}$ 
and $R_n(h)=\expect{D_n}{(h(X)-Y)^2}$.
 
 In supervised machine learning, one typically considers a hypothesis space $H$ of functions $h:\mathcal{X}\to\R$\ and tries to
 infer ${\argmin}_{h\in H} R(h)$ (assuming for simplicity that the minimizer exists). Typically this is done by (regularized) empirical risk minimization
$\text{argmin}_{h\in H} R_n(h)+ \lambda \Omega(h)$, where $\lambda >0$ and $\Omega$ determine the regularization. 
The risk of $h$ can then be decomposed in generalization and training error $R(h) = (R(h)-R_n(h)) + R_n(h).$

 \paragraph{Kernel ridge regression.} We will focus on solving the regression problem over a reproducing kernel Hilbert space (RKHS) \cite{SchSmo02,shawe-taylor_cristianini_2004}. An RKHS  $\mathcal{F}$ associated with a positive definite kernel $k: \mathcal{X} \times \mathcal{X} \to \R$ is the space of functions such that for all $x\in \mathcal{X}$ and $h \in \mathcal{F}$ the \emph{reproducing} property $h(x) = \braket{h, k(x, \cdot)}_\mathcal{F}$ holds.  Kernel ridge regression regularizes the RKHS norm, i.e., $\Omega(h) = \|h\|^2_\mathcal{F}.$
With observations $\{(x^{(i)}, y^{(i)})\}_{i=1}^n$ we can compute the kernel matrix $K(X,X)_{ij}= k(x^{(i)}, x^{(j)})$ and the Representer Theorem \citep{SchHerSmo01} ensures that the empirical risk minimizer of kernel ridge regression is of the form $\hat{f}_n^\lambda(\cdot) = \sum_{i=1}^n \alpha_i k(x^{(i)}, \cdot)$, with $\alpha = (K(X,X) + \lambda\, \mathrm{id})^{-1} y$. 
The goal of our work is to study when a (quantum) kernel is suitable for learning a particular problem. The central object to study this is the integral operator.

 \paragraph{Spectral properties and inductive bias.} For kernel $k$ and marginal distribution $\mu$, the integral operator $K$, is defined as $(K f)(x) = \int k(x,x') f(x') \mu(\d x')$. Mercer's Theorem ensures that there exist a spectral decomposition of $K$ with (possibly infinitely many) eigenvalues $\gamma_i$ (ordered non-increasingly) and corresponding eigenfunctions $\phi_i$, which are orthonormal in $L^2_\mu$, i.e., $\braket{\phi_i, \phi_j} = \delta_{i,j}$. We will assume that $\trace{K} = \sum_i \gamma_i = 1$ which we can ensure by rescaling the kernel.  We can then write $k(x,x') = \sum_i \gamma_i \phi_i(x)\phi_i(x')$. While the functions $\phi$ form a basis of $\mathcal{F}$ they might not completely span $L^2_\mu$. In this case we simply complete the basis and implicitly take $\gamma=0$ for the added functions. Then we can decompose functions in $L_\mu^2$ as 
\begin{align}
    f(x) = \sum\nolimits_i a_i \phi_i(x).
\end{align}
We have $\|f\|^2 = \sum_i a_i^2$ and $\|f\|^2_\mathcal{F} = \sum_i \tfrac{a^2_i}{\gamma_i}$ (if $f\in \mathcal{F}$). Kernel ridge regression penalizes the RKHS norm of functions. The components corresponding to zero eigenvalues are infinitely penalized and cannot be learned since they are not in the RKHS. For large regularization $\lambda$ the solution $\hat{f}_n^\lambda$ is heavily \emph{biased} towards learning only the coefficients of the principal components (corresponding to the largest eigenvalues) and keeps the other coefficients small (at the risk of \emph{underfitting}). Decreasing the regularization allows ridge regression to also fit the other components, however, at the potential risk of overfitting to the noise in the empirical data. Finding good choices of $\lambda$ thus balances this \emph{bias-variance} tradeoff. 

We are less concerned with the choice of $\lambda$, but rather with the spectral properties of a kernel that allow for a quantum advantage. Similar to the above considerations, a target function $f$ can easily be learned if it is well \emph{aligned} with the principal components of a kernel. In the easiest case, the kernel only has a single non-zero eigenvalue and is just $k(x,x') = f(x)f(x')$. 
Such a construction is arguably the simplest path to a quantum advantage in ML.
\begin{example}[Trivial Quantum Advantage]
Let $f$ be a scalar function that is easily computable on a quantum device but requires exponential resources to approximate classically. Generate data as $Y = f(X) + \epsilon$. The kernel $k(x,x') = f(x)f(x')$ then has an exponential advantage for learning $f$ from data.
\end{example}

To go beyond this trivial case, we introduce two qualitative measures to judge the quality of a kernel for learning the function $f$. 
The \emph{kernel target alignment} of  \citet{Cristianini2006} is
\begin{align}\label{eq:KT_alignment}
    A(k, f) = \frac{\braket{k, f\otimes f}}{\braket{k, k}^{1/2} \braket{f\otimes f, f\otimes f}^{1/2}} =  \frac{\sum_i \gamma_i a_i^2}{(\sum_i \gamma^2_i)^{1/2} \sum_i a_i^2}
\end{align}
and measures how well the kernel fits $f$. If $A = 1$, learning reduces to estimating a single real parameter, whereas for $A=0$, learning is infeasible. We note that the kernel target alignment also weighs the contributions of $f$ depending on the corresponding eigenvalue, i.e., the alignment is better if large  $|a_i|$ correspond to large $\gamma_i$. The kernel target alignment was used extensively to optimize kernel functions \citep{cortes12aalignment} and recently also used to optimize quantum kernels \citep{hubregtsen2021training}.

In a similar spirit, the \emph{task-model alignment} of  \citet{Canatar2021} measures how much of the signal of $f$ is captured in the first $i$ principal components: $C(i) = \sum_{j\leq i} a_j^2 (\sum_{j} a_j^2)^{-1}.$
The slower $C(i)$ approaches 1, the harder it is to learn as the target function is more spread over the eigenfunctions.

\section{Quantum computation in machine learning}\label{sec:quantum_computation}
In this section we introduce hypothesis spaces containing functions whose output is given by the result of a quantum computation. 
For a general introduction to concepts of quantum computation we refer to the book of \citet{nielsen2010}.

We will consider quantum systems comprising $d\in \mathbb{N}$ qubits. Discussing such systems and their algebraic properties does not require in-depth knowledge of quantum mechanics. A {\em pure state} of a single qubit is described by vector $(\alpha, \beta)^\top \in \mathbb{C}^2$ s.t. $|\alpha|^2 + |\beta|^2 = 1$ and we write $\ket{\psi} = \alpha \ket{0} + \beta \ket{1}$, where $\{\ket{0}, \ket{1}\}$ forms the computational basis. A $d$ qubit pure state lives in the tensor product of the single qubit state spaces, i.e., it is described by a normalized vector in $\mathbb{C}^{2^d}$. A {\em mixed state} of a $d$-qubit system can be described by a density operator $\rho \in \mathbb{C}^{2^d\times 2^d}$, i.e., a positive definite matrix ($\rho \geq 0$) with unit trace ($\trace{\rho} = 1$). 
For a pure state $\ket{\psi}$ the corresponding density operator is $\rho = \ket{\psi}\bra{\psi}$ (here, $\bra{\psi}$ is the complex conjugate transpose of $\ket{\psi}$). 
A general density operator can be thought of as a classical probabilistic mixture of pure states.
We can extract information from $\rho$ by estimating (through repeated measurements) the expectation of a suitable \emph{observable}, i.e., a Hermitian operator $M= M^\dag$ (where the adjoint $(\cdot)^\dag$ is the complex conjugate of the transpose), as
\begin{align}\label{eq:quantum_model0}
    \trace{\rho M}.
\end{align}

Put simply, the potential advantage of a quantum computer arises from its state space being exponentially large in the number of qubits $d$, thus computing general expressions like \eqref{eq:quantum_model0} on a classical computer is exponentially hard. 
However, besides the huge obstacles in building quantum devices with high fidelity, the fact that the outcome of the quantum computation \eqref{eq:quantum_model0} has to be estimated from measurements often prohibits to easily harness this power, see also \citet{Wang2021towards, peters2021machine}. We will discuss this in the context of quantum kernels in Section \ref{sec:biased_kernel}.

We consider  parameter dependent quantum states $\rho(x)=U(x)\rho_0 U^\dag (x)$ that are generated by evolving the initial state $\rho_0$ with the data dependent unitary transformation $U(x)$ \citep{havlivcek2019, schuld2021quantum}. Most often we will without loss of generality assume that the initial state is $\rho_0 = (\ket{0}\bra{0})^{\otimes d}$.
We then define quantum machine learning models via observables $M$ of the data dependent state
\begin{align}\label{eq:quantum_model}
    f_M(x) = \trace{U(x)\rho_0 U^\dag (x) M} = \trace{\rho(x) M}.
\end{align} 
In the following we introduce the two most common function classes suggested for quantum machine learning. We note that there also exist proposals that do not fit into the form of Eq.~\eqref{eq:quantum_model} \citep{PerezSalinas2020datareuploading, schuld2021expressive, hubregtsen2021training}. 

 \paragraph{Quantum neural networks.}
A "quantum neural network" (QNN) is defined via a \emph{variational quantum circuit} (VQC) \citep{farhi2018classification, cerezo2020variational, bharti2021noisy}. Here the observable $M_\theta$ is parametrized by $p\in \mathbb{N}$ classical parameters $\theta \in \Theta \subseteq \mathbb{R}^p$. This defines a parametric function class $\mathcal{F}_\Theta = \{f_{M_\theta} | \theta\in \Theta\}$. 
The most common ansatz is to consider $M_\theta=U(\Theta)M U^\dag(\Theta)$ where $U(\Theta)=\prod_i U(\theta_i)$
is the composition of unitary evolutions each acting on few qubits.
For this and other common models of the parametric circuit it is possible to analytically compute gradients and specific optimizers for quantum circuits based on gradient descent have been developed \citep{Guerreschi2017, Mitarai2018, schuld2019evaluating, stokes2019quantum, sweke2019stochastic, Kubler2020adaptiveoptimizer}. Nevertheless, the optimization is usually a non-convex problem and suffers from additional difficulties due to oftentimes exponentially (in $d$) vanishing gradients \citep{mcclean2018barren}. This hinders a theoretical analysis.  Note that the non-convexity does not arise from the fact that the QNN can learn non-linear functions, but rather because the observable $M_\theta$ depends non-linearly on the parameters. In fact, the QNN functions are linear in the \emph{fixed} feature mapping $\rho(x)$. Therefore the analogy to classical neural networks is somewhat incomplete.

 \paragraph{Quantum kernels.}
 
 \begin{table}
 \renewcommand{\arraystretch}{1.4}

  \caption{Concepts in the quantum Hilbert space $\H$ and the reproducing kernel Hilbert space $\mathcal{F}$.}
  \label{table:overview}
  \centering
  \begin{tabular}{l|l}
  \toprule
  Quantum Space of $d$ qubits & RKHS \\
    \toprule
    $x \mapsto \rho(x) \in \H$  (explicit feature map) & $x \mapsto k(\cdot, x) \in \mathcal{F}$ (canonical feature map)      \\
    $ \H = \left\{\rho \in \mathbb{C}^{2^d \times 2^d}\, |\, \rho = \rho^\dag, \rho \geq 0, \trace{\rho}=1\right\}$ & \\
    \hline
    $k(x,x') = \trace{\rho(x)\rho(x')} = \braket{\rho(x), \rho(x')}_\H$      & $k(x,x') = \braket{k(\cdot, x), k(\cdot, x')}_\mathcal{F}$ \\
    \hline
      $\mathcal{F} = \{f_M | f_M(\cdot) = \trace{\rho(\cdot)M}, M=M^\dag\}$ & $\mathcal{F} = \overline{\text{Span}}\left(\left\{k(\cdot, x) \, |\, x \in \mathcal{X}\right\}\right)$     \\
    \bottomrule
  \end{tabular}
\end{table}
 
The class of functions we consider are RKHS functions where the kernel is expressed by a quantum computation.
The key observation is that  \eqref{eq:quantum_model} is linear in $\rho(x)$.
Instead of optimizing over the parametric function class $\mathcal{F}_\Theta$ , we can define the nonparametric class of functions $\mathcal{F} = \{f_M | f_M(\cdot) = \trace{\rho(\cdot)M}, M=M^\dag\}$.\footnote{$\mathcal{F}$ is defined for a fixed feature mapping $x\mapsto \rho(x)$. Although $M$ is finite dimensional and thus $\mathcal{F}$ can be seen as a parametric function class, we will be interested in the case where $M$ is exponentially large in $d$ and we can only access functions from $\mathcal{F}$ implicitly. Therefore we refer to it as nonparametric class of functions.}
To endow this function class with the structure of an RKHS, observe that 
the expression $\trace{\rho_1\rho_2}$ defines a scalar product on density matrices. We then define kernels via the inner product of data-dependent density matrices:

\begin{definition}[Quantum Kernel \citep{havlivcek2019, Schuld2019QKernel, schuld2021quantum}]
Let $\rho:  x \mapsto \rho(x)$ be a fixed feature mapping from $\mathcal{X}$ to density matrices. Then the corresponding \emph{quantum kernel} is $k(x,x') = \trace{\rho(x)\rho(x')}$.
\end{definition}

The Representer Theorem \citep{SchHerSmo01} reduces the empirical risk minimization over the exponentially large function class $\mathcal{F}$ to an optimization problem with a set of parameters whose dimensionality corresponds to the training set size.  Since the ridge regression objective is convex (and so are many other common objective functions in ML), this can be solved efficiently with a classical computer.

In the described setting, the quantum computer is only used to estimate the kernel.  
For pure state encodings, this is done by inverting the data encoding transformation (taking its conjugate transpose) and measuring the probability that the resulting state equals the initial state $\rho_0$. To see this we use the cyclic property of the trace $k(x,x') = \trace{\rho(x)\rho(x')} = \trace{U(x)\rho_0 U^\dag(x)\,U(x')\rho_0 U^\dag(x')} = \trace{\left(U^\dag(x')U(x)\rho_0 U^\dag(x)U(x')\right)\,\rho_0}$. 
If $\rho_0 = (\ket{0}\bra{0})^{\otimes d}$, then $k(x,x')$ corresponds to the probability of observing every qubit in the '$0$' state after the initial state was evolved with $U^\dag(x')U(x)$. 
To evaluate the kernel, we thus need to estimate this probability from a finite number of measurements. For our theoretical analysis we work with the exact value of the kernel and in our experiments we also simulate the full quantum state. We discuss the difficulties related to measurements in Sec.~\ref{sec:biased_kernel}.

\section{The inductive bias of simple quantum kernels}\label{sec:Theory}
We now study the inductive bias for simple quantum kernels
and 
their learning performance. We first give a high level
discussion of a general hurdle for quantum machine learning models to surpass classical methods and then analyze two specific kernel approaches in more detail.

 \paragraph{Continuity in classical machine learning.}
Arguably the most important bias in nonparametric regression are continuity assumptions on the regression function. 
This becomes particularly apparent in, e.g., nearest neighbour regression or random regression forests \cite{Breiman01} where the regression function is
a weighted average of close points.  
Here we want to emphasize that there is a long list of results concerning the minimax optimality of kernel methods for regression problems
\cite{CaponnettoDeVito,SteinwartHS09,FischerSteinwart}.
In particular these results show that asymptotically the convergence of kernel ridge regression of, e.g., Sobolev functions 
reaches the statistical limits which also apply to any quantum method.

 \paragraph{A simple quantum kernel.}
We now restrict our attention to rather simple kernels to facilitate a theoretical analysis. 
As indicated above we consider data in $\mathcal{X}\subset \R^d$ and we assume that the distribution
$\mu$ of
the data  factorizes over the coordinates (i.e. $\mu$ can be written as $\mu=\bigotimes \mu_i$).
This data is embedded in a $d$-qubit quantum circuit.
Let us emphasize here that the RKHS based  on a quantum state of $d$-qubits is at most $4^d$
dimensional, i.e., finite dimensional and in the infinite data limit
$n\to \infty$ standard convergence guarantees from parametric statistics apply.  
Here we consider growing dimension
$d\to \infty$, and sample size polynomial in the dimension $n=n(d)\in \textrm{Poly}(d)$.
In particular 
the sample size $n\ll 4^d$ will be much smaller than the dimension of the feature space
and bounds from the parametric literature do not apply.

Here we consider embeddings where each coordinate is embedded into a single qubit using a map $\varphi_i$ followed by an arbitrary 
unitary transformation $V$, so that we can express
the embedding in the quantum Hilbert space as $|\psi^V(x)\rangle = V \bigotimes |\varphi_i(x_i)\rangle$ with corresponding 
density matrix (feature map)\footnote{When we can ignore $V$, we simply assume $V=\mathrm{id}$ and write $\rho(x)$ instead of $\rho^V(x)$. For the kernel, since $V^\dag V = \mathrm{id}= V^\dag V$ and due to the cyclic property of the trace we have $k^V(x,x')=\trace{\rho^V(x) \rho^V(x')} = \trace{V \rho(x) V^\dag V \rho(x')V^\dag} = \trace{V^\dag V \rho(x) V^\dag V \rho(x')} = \trace{\rho(x)\rho(x')} = k(x,x')$.} 
\begin{align}
    \rho^V(x)=\ket{\psi^V(x)}\bra{\psi^V(x)}.
\end{align}
Note that the corresponding kernel $k(x, x')=\trace{\rho(x)\rho(x')}$ is independent of $V$
and factorizes
$k(x,x')=\trace{\bigotimes\rho_i(x_i)\bigotimes\rho_i(x_i')}=
\prod \trace{\rho_i(x_i)\rho_i(x_i')}$ where $\rho_i(x_i) = |\varphi_i(x_i)\rangle\langle\varphi_i(x_i)|$.
The  product structure of the kernel 
allows us to characterize the RKHS generated by $k$ based on the one dimensional case.
The embedding of a single variable can be parametrized by complex valued 
functions $a(x)$, $b(x)$ as 
\begin{align}
    |\varphi_i(x)\rangle = a(x)|0\rangle + b(x)|1\rangle.
\end{align}
One important object characterizing this embedding turns out to be the  mean density matrix of this embedding given by $\rho_{\mu_i} = \int \rho_i(y)\, \mu_i(\d y) = \int |\varphi_i(y)\rangle\langle\varphi_i(y)|\, \mu_i(\d y)$. This
can be identified with the kernel mean embedding of the distribution \cite{MuandetFSS17}.
Note that for factorizing base measure $\mu$ the factorization $\rho_\mu=\bigotimes \rho_{\mu_i}$ holds.
Let us give a concrete example to clarify the setting, see Fig.~\ref{fig:circuits}(b).
\begin{example}\label{ex:embedding} \cite[Example III.1.]{schuld2021quantum}
We consider the cosine kernel where $a(x)=\cos(x/2)$, $b(x)=i\sin(x/2)$.
This embedding  can be realized using a single quantum $R_X(x) =\exp{(-i\tfrac{x}{2}\sigma_x)}$ gate such that $|\psi(x)\rangle = R_X(x)|0\rangle = \cos(x/2)|0\rangle +i \sin(x/2)|1\rangle$. In this case the kernel is given by 
\begin{align}
    k(x,x') =|\langle 0| R^\dag_X(x)R_X(x)|0\rangle|^2
= |\cos(\tfrac{x}{2})\cos(\tfrac{x'}{2})+\sin(\tfrac{x}{2})\sin(\tfrac{x'}{2})|^2= \cos(\tfrac{x-x'}{2})^2.
\end{align}
As a reference measure $\mu$  we consider the uniform measure on $[-\pi,\pi]$.
Then the mean density matrix is the completely mixed state $\rho_\mu = \tfrac12 \, \mathrm{id}$. For $\R^d$ valued data whose coordinates are  encoded independently 
the kernel is given by
$k(x,x') = \prod \cos^2\left( (x_i-x_i')/2\right) $ and $\rho_\mu=2^{-d}\mathrm{id}_{2^d\times 2^d}$.
We emphasize that due to the kernel trick this kernel can be 
evaluated classically in runtime $O(d)$.
\end{example}

 \paragraph{Quantum RKHS.} 
We now characterize the RKHS and the eigenvalues of the integral operator
for quantum kernels.
The RKHS consists of all functions $f\in \mc{F}$ that can be written as $f(x)=\trace{\rho(x)M}$ where $M\in \mathbb{C}^{2^d\times 2^d}$ is a Hermitian operator.
Using this characterization of the finite dimensional RKHS we can rewrite the 
infinite dimensional eigenvalue problem of the integral operator as a finite dimensional problem.
The action of the corresponding integral operator on $f$ can be written as
\begin{align}
    \begin{split}\label{eq:int_operator}
        (Kf)&(x)
        = \int f(y)k(y,x)\, \mu(\d y)
        = \int \trace{M \rho(y)} \trace{\rho(y)\rho(x)}\, \mu(\d y)
        \\
        &= \int \trace{(M\otimes \rho(x))(\rho(y)\otimes \rho(y))} \, \mu(\d y)
        = \trace{(M\otimes \rho(x)) \int \rho(y)\otimes \rho(y)\, \mu(\d y)}
    \end{split}
\end{align}
We denote the operator $O_\mu = \int \rho(y)\otimes \rho(y)\, \mu(\d y)$ for which $\trace{O_\mu}=1$ holds.
Then we can write
\begin{align}\label{eq:part_trace_rewriting}
\begin{split}
    (Kf)(x)= \trace{O_\mu(M\otimes \rho(x) )} 
    &= \trace{O_\mu(M\otimes \mathrm{id})(\mathrm{id}\otimes \rho(x))}
    \\
    &= \trace{\partialtrace{1}{O_\mu(M\otimes \mathrm{id})} \rho(x)}
\end{split}
\end{align}
where $\partialtrace{1}{\cdot}$ refers to the partial trace over the first factor. 
For the definition and a proof of the last equality we refer to Appendix~\ref{app:partial_trace}.
The eigenvalues of $K$ can now be identified with the eigenvalues of the linear map $T_\mu$ mapping $M\to  \partialtrace{1}{O_\mu(M\otimes \mathrm{id})}$. 
As shown in the appendix there is an eigendecomposition such that
$T_\mu(M) = \sum \lambda_i A_i\trace{A_i M} $ where $A_i$ are orthonormal Hermitian matrices
(for details, a proof  and an example we refer to Appendix \ref{app:more_details}). The eigenfunctions of $K$ are given by
$f_i(x)=\trace{\rho(x)A_i}$.

We now state a bound that controls the largest eigenvalue of the integral operator $K$ 
in terms of the eigenvalues of the mean density matrix $\rho_\mu$ (Proof in Appendix \ref{sec:proof_lemma_1}).
\begin{lemma}\label{le:largest_ev}
The largest eigenvalue $\gamma_{max}$ of $K$ satisfies the bound $\gamma_{max} \leq \sqrt{\trace{\rho_\mu^2}}$.
\end{lemma}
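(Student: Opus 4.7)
My plan is to bound the top eigenvalue $\gamma_{\max}$ via the Hilbert--Schmidt norm of $K$ and then rewrite that norm in terms of the operator $O_\mu$ that was introduced just before the lemma. Since the kernel is real, symmetric and positive semidefinite, the integral operator $K$ is compact, self-adjoint, and positive semidefinite, so its eigenvalues $\gamma_i\geq 0$ satisfy the elementary inequality
\begin{align*}
\gamma_{\max}^2 \;\leq\; \sum_i \gamma_i^2 \;=\; \|K\|_{\mathrm{HS}}^2 \;=\; \iint k(x,x')^2\,\mu(\d x)\,\mu(\d x'),
\end{align*}
so it suffices to bound this $L^2$ integral of the kernel.

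The first substantive step is to recognize that this integral equals $\trace{O_\mu^2}$. Using the tensor identities $\trace{A}\trace{B}=\trace{A\otimes B}$ and $(A\otimes A)(B\otimes B)=(AB)\otimes(AB)$, one obtains $\trace{\rho(x)\rho(x')}^2 = \trace{(\rho(x)\otimes\rho(x))(\rho(x')\otimes\rho(x'))}$, and then pulling the $\mu$-expectations inside the trace gives exactly $\trace{O_\mu^2}$.

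The second step is the inequality $\trace{O_\mu^2}\leq \trace{\rho_\mu^2}$. Written as expectations, $\trace{O_\mu^2}=\expect{x,x'\sim\mu}{k(x,x')^2}$ and $\trace{\rho_\mu^2}=\expect{x,x'\sim\mu}{k(x,x')}$, so this reduces to the pointwise bound $k(x,x')^2\leq k(x,x')$. This in turn follows from $k(x,x')=\trace{\rho(x)\rho(x')}\in[0,1]$: nonnegativity holds because the product of two PSD operators has nonnegative trace, and the upper bound from $\trace{\rho(x)\rho(x')}\leq \|\rho(x)\|_{\mathrm{op}}\,\trace{\rho(x')}\leq 1$ since density matrices have operator norm and trace at most one. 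Chaining the three bounds yields $\gamma_{\max}^2\leq \trace{O_\mu^2}\leq \trace{\rho_\mu^2}$.

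I expect the only genuinely clever step to be the first identity, namely that the squared kernel naturally tensor-squares into $O_\mu^2$; everything else is routine Hilbert--Schmidt arithmetic and the observation that density-matrix fidelities lie in $[0,1]$. Notably, the argument never uses the product/tensor structure of the embedding featured in this section, so it extends to any density-matrix feature map.
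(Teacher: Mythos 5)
Your proof is correct, and it reaches the key quantity $\int k(x,y)\,\mu(\d x)\mu(\d y)=\trace{\rho_\mu^2}$ by a genuinely different route than the paper. The paper's proof takes a normalized top eigenfunction $f$, uses the Mercer decomposition on the diagonal ($1=k(x,x)\geq \gamma_{max}f(x)^2$) to get the sup-norm bound $|f|\leq \gamma_{max}^{-1/2}$, and then uses pointwise nonnegativity of $k$ to compare $\gamma_{max}=\int f(x)k(x,y)f(y)\,\mu(\d x)\mu(\d y)$ with $\gamma_{max}^{-1}\int k(x,y)\,\mu(\d x)\mu(\d y)$, yielding $\gamma_{max}^2\leq\trace{\rho_\mu^2}$. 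You instead bound $\gamma_{max}^2\leq\lVert K\rVert_{HS}^2=\int k(x,y)^2\,\mu(\d x)\mu(\d y)$ and then use $0\leq k\leq 1$ pointwise to get $\int k^2\leq \int k=\trace{\rho_\mu^2}$; this avoids any discussion of eigenfunction boundedness and, as you note, extends verbatim to mixed-state feature maps (the paper's diagonal identity $k(x,x)=1$ holds only for pure states, though its argument also adapts). Your detour through $\trace{O_\mu^2}$ is a correct and pleasant identification, but it is not actually needed: the inequality $\trace{O_\mu^2}\leq\trace{\rho_\mu^2}$ is, as you yourself observe, nothing but $\int k^2\leq\int k$, which you could invoke directly after the Hilbert--Schmidt step. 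Both proofs ultimately rely only on the kernel being positive with values in $[0,1]$ and on the identity $\int k\,\d\mu\,\d\mu=\trace{\rho_\mu^2}$, which the paper derives via $\lVert\tilde K\rVert_{HS}^2$ for the linear kernel and you derive by expanding $\trace{\rho_\mu^2}$ directly.
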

The lemma above shows that the squared eigenvalues of $K$ are bounded by 
$\trace{\rho_\mu^2}$, an expression known as the \textit{purity} \citep{nielsen2010} of the density matrix $\rho_\mu$,
which measures the diversity of the data embedding. 
On the other hand the eigenvalues of $K$ are closely related to the learning guarantees of kernel ridge regression.  
In particular, standard generalization bounds for kernel ridge regression \cite{Mohri12} become vacuous when $\gamma_{max}$ is exponentially smaller than the training sample size (if $\trace{K}=1$ which holds for pure state embeddings).
The next result shows that this is not just a matter of bounds.

\begin{thm} \label{th:largest_ev}
Suppose the purity of the embeddings $\rho_{\mu_i}$ satisfies $\trace{\rho_{\mu_i}^2}\leq \delta < 1$ as the dimension and number of qubits $d$ grows. Furthermore, suppose the training sample size only grows polynomially in $d$, i.e., $n\leq d^l$ for some fixed $l\in \mathbb{N}$.
Then there exists $d_0 = d_0(\delta, l, \varepsilon)$ such that for all $d \geq d_0$ 
no function can be learned using kernel ridge regression with the 
$d$-qubit  kernel $k(x,x')=\trace{\rho(x)\rho(x')}$
in the sense that for any $f\in L^2$, with probability at least $1-\varepsilon$ for all $\lambda \geq 0$
\begin{align}
    R(\hat{f}^\lambda_n)\geq (1-\varepsilon) \lVert f\rVert^2.
\end{align}
\end{thm}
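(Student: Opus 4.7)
The strategy is to use the purity hypothesis to force the top eigenvalue $\gamma_{\max}$ of the integral operator to be exponentially small, which makes the empirical kernel matrix a tiny perturbation of the identity on any polynomially-sized sample, and pins the KRR estimator $\hat f^\lambda_n$ arbitrarily close to zero in $L^2$ uniformly in $\lambda\geq 0$. The tensor factorization $\mu=\bigotimes_i \mu_i$ gives $\rho_\mu=\bigotimes_i \rho_{\mu_i}$, so $\trace{\rho_\mu^2}=\prod_i \trace{\rho_{\mu_i}^2}\leq \delta^d$, and Lemma~\ref{le:largest_ev} yields $\gamma_{\max}\leq \delta^{d/2}$. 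Since pure-state embeddings satisfy $k(x,x)=1$ so that $\trace{K}=1$ automatically, we also obtain $\trace{K^2}=\sum_\ell \gamma_\ell^2 \leq \gamma_{\max}\leq \delta^{d/2}$.

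Next I would decompose $K(X,X)=\mathrm{id}+E$ where $E$ collects the off-diagonal entries $k(x_i,x_j)$. Using $\E_{x,x'}[k(x,x')^2]=\trace{K^2}$ one obtains $\E\lVert E\rVert_F^2\leq n^2\delta^{d/2}$, so Markov and $n\leq d^l$ give $\lVert E\rVert_{\mathrm{op}}\leq 1/2$ with probability $\geq 1-\varepsilon/3$ once $d\geq d_0(\delta,l,\varepsilon)$. On this event, $K(X,X)+\lambda\,\mathrm{id}\succeq \tfrac12\mathrm{id}$ uniformly in $\lambda\geq 0$, hence $\lVert\alpha\rVert\leq 2\lVert y\rVert$ for $\alpha=(K(X,X)+\lambda\,\mathrm{id})^{-1}y$. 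For the $L^2$ norm of the estimator, write $\lVert \hat f^\lambda_n\rVert^2=\alpha^\top G\alpha$ with $G_{ij}=\int k(x,x_i)k(x,x_j)\,\mu(\d x)$; via Mercer, $G=\Phi^\top\Gamma^2\Phi$ while $K(X,X)=\Phi^\top\Gamma\Phi$ (where $\Phi_{\ell i}=\phi_\ell(x_i)$ and $\Gamma=\mathrm{diag}(\gamma_\ell)$), and $\Gamma^2\preceq \gamma_{\max}\Gamma$ immediately yields the matrix inequality $G\preceq \gamma_{\max}K(X,X)$, so $\lVert G\rVert_{\mathrm{op}}\leq \tfrac32\gamma_{\max}$. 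A second Markov bound gives $\lVert y\rVert^2\leq O\!\left(n(\lVert f\rVert^2+\sigma^2)/\varepsilon\right)$, and combining everything one has $\lVert \hat f^\lambda_n\rVert^2\leq C\,n\,\delta^{d/2}(\lVert f\rVert^2+\sigma^2)/\varepsilon^2$, which is exponentially small in $d$ while $n$ remains polynomial.

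Concluding via the elementary bound $R(\hat f^\lambda_n)\geq \lVert \hat f^\lambda_n-f\rVert^2\geq \lVert f\rVert^2-2\lVert \hat f^\lambda_n\rVert\lVert f\rVert$, choosing $d\geq d_0$ large enough that $\lVert \hat f^\lambda_n\rVert\leq (\varepsilon/2)\lVert f\rVert$ yields $R(\hat f^\lambda_n)\geq (1-\varepsilon)\lVert f\rVert^2$, as required. The main obstacle is ensuring the bound holds simultaneously for all $\lambda\geq 0$; this is automatic because $(K(X,X)+\lambda\,\mathrm{id})^{-1}$ is decreasing in $\lambda$ in the Loewner order, so the worst case is $\lambda=0$, which is already controlled once $K(X,X)$ is well-conditioned. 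A minor technicality is the implicit assumption of finite noise variance: since $R(\hat f^\lambda_n)\geq \sigma^2$, the claim is trivial when $\sigma^2\geq \lVert f\rVert^2$, so we may WLOG assume $\sigma^2<\lVert f\rVert^2$ and absorb it into the constant $C$ above.
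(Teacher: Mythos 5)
Your proposal is correct and follows essentially the same strategy as the paper's proof: bound $\gamma_{\max}\leq\sqrt{\trace{\rho_\mu^2}}\leq\delta^{d/2}$ via the tensor factorization and Lemma~\ref{le:largest_ev}, show the kernel matrix is a small perturbation of the identity with high probability so that $\lVert\alpha\rVert\lesssim\lVert y\rVert$ uniformly in $\lambda\geq 0$, and conclude that $\lVert\hat f_n^\lambda\rVert_2$ is exponentially small, forcing the risk lower bound. Your variants are minor refinements rather than a different route — the Frobenius-norm Markov bound replaces the paper's entrywise Markov plus Gerschgorin argument, and the observation $G\preceq\gamma_{\max}K(X,X)$ replaces the paper's term-by-term bound $\lVert k(x_i,\cdot)\rVert_2\leq\sqrt{\gamma_{\max}}$ (saving a factor of $n$), while your explicit treatment of label noise via the reduction to $\sigma^2<\lVert f\rVert^2$ is a detail the paper's noiseless Theorem~\ref{th:learning_bound} glosses over.
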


The proof of the theorem can be found in Appendix \ref{sec:proof1}. It relies on a general result (Theorem \ref{th:learning_bound} in Appendix~\ref{sec:proof1}) which shows that for any (not necessarily quantum) kernel the solution of kernel ridge regression  cannot generalize when the largest eigenvalue in the Mercer decomposition 
is sufficiently small (depending on the sample size). Then the proof of Theorem~\ref{th:largest_ev} essentially boils down to proving a bound on the largest eigenvalue using Lemma~\ref{le:largest_ev}.

Theorem \ref{th:largest_ev} implies that generalization is only possible when the mean embedding of most coordinates is
close to a pure state, i.e. the embedding $x\to |\varphi_i(x)\rangle$ is almost constant. 
To make learning from data feasible we cannot use the full expressive power of the quantum Hilbert space but instead only very restricted embeddings allow to learn from data. This 
generalizes an observation already made in \cite{Huang2021}.
Since also classical methods allow to handle high-dimensional and infinite dimensional RKHS
the same problem occurs for classical kernels where one solution is to adapt the bandwidth
of the kernel to control the expressivity of the RKHS. In principle this is also possible in the quantum context, e.g., for the cosine embedding.

 \paragraph{Biased kernels.}\label{sec:biased_kernel}
We have discussed that without any inductive bias, the introduced quantum kernel cannot learn any function for large $d$. One suggestion to reduce the expressive power of the kernel is the use of \emph{projected} kernels \cite{Huang2021}. They are defined using  reduced density matrices given by
$\tilde{\rho}^V_m(x)=\partialtrace{m+1\ldots d}{\rho^V(x)}$
where $\partialtrace{m+1\ldots d}{\cdot}$ denotes the partial trace over qubits $m+1$ to $d$ (definition in Appendix~\ref{app:partial_trace}) . 
Then they consider the usual quantum kernel for this embedding
$q_m^V(x,x')=\trace{\tilde{\rho}^V_m(x)\tilde{\rho}^V_m(x')}$.
Physically, this corresponds to just measuring
the first $m$ qubits and the functions $f$ in the RKHS can be written in terms of a hermitian operator $M$ acting on $m$ qubits
so that $f(x)= \trace{\rho^V(x)(M\otimes \mathrm{id})} = \trace{\tilde\rho_m^V(x)M}$. If $V$ is sufficiently complex it is assumed that $f$ is hard to compute classically \citep{quantumcomputationalsupremacy}.

Indeed above procedure reduces the generalization gap. But this comes at the price of an increased \emph{approximation error} if the remaining RKHS cannot fully express the target function $f^\ast$ anymore, i.e., the learned function \emph{underfits}. Without any reason to believe that the target function is well represented via the projected kernel, we cannot hope for a performance improvement by simply reducing the size of the RKHS in an arbitrary way.
However, if we \emph{know} something about the data generating process than this can lead to a meaningful inductive bias. For the projected kernel this could be that we \emph{know} that the target function can be expressed as $f^*(x) = \trace{\tilde\rho_m^V(x)M^*}$, see Fig.~\ref{fig:circuits}. In this case using $q_m^V$ improves the generalization error without increasing the approximation error. To emphasize this, we will henceforth refer to $q_m^V$ as \emph{biased kernel}.

We now investigate the RKHS for reduced density matrices where $V$ is a Haar-distributed random unitary matrix 
(proof in Appendix~\ref{sec:proof2}).

\begin{thm}\label{thm:biased_kernels}
Suppose $V$ is distributed according to the Haar measure on the group of unitary matrices.
Fix $m$. Then the following two statements hold:
\begin{itemize}[itemsep=0pt, parsep=5pt, topsep=0pt, leftmargin=5pt]
    \item[]a) The reduced density operator satisfies with high probability
    $\tilde{\rho}^V_m=2^{-m}\mathrm{id} + O(2^{-d/2})$ 
    and the 
    projected kernel satisfies with high probability
    $q_m^V(x,x')=2^{-m} + O(2^{-d/2})$ as $d\to \infty$.
    \item[]b) Let $T_{\mu,m}^V$ denote the linear integral operator for the kernel $q_m^V$ as defined above.
    Then the averaged operator 
    $\expect{V}{T_{\mu,m}^V}$ 
     has one eigenvalue $2^{-m}+O(2^{-2d})$
     whose eigenfunction is constant (up to higher order terms of
     order $O(2^{-2d})$ and 
    $2^{2m}-1$ eigenvalues $2^{-m-d} +O(2^{-2d})$.
\end{itemize}
\end{thm}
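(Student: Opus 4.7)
The central tool throughout is the Haar second-moment identity: for $V$ Haar-distributed on $U(N)$ with $N=2^d$,
\[
\expect{V}{V^{\otimes 2} X V^{\dagger\otimes 2}} \;=\; \alpha_X\,\mathrm{id}_{N^2} + \beta_X\,F_N,
\]
where $F_N$ is the swap on $(\mathbb{C}^N)^{\otimes 2}$ and $\alpha_X,\beta_X$ are determined by matching $\trace{\cdot}$ and $\trace{F_N\,\cdot}$; this holds because the average must lie in the commutant of $U(N)^{\otimes 2}$, spanned by $\{\mathrm{id}_{N^2},F_N\}$ by Schur-Weyl. Specialising to $X=O_\mu$ and using $\trace{O_\mu}=1$ together with $\trace{F_N O_\mu}=\int\trace{\rho(y)^2}\,\mu(\d y)=1$ (valid because the $\rho(y)$ are pure), the identity becomes
\[
\expect{V}{V^{\otimes 2} O_\mu V^{\dagger\otimes 2}} \;=\; \frac{\mathrm{id}_{N^2}+F_N}{N(N+1)}.
\]
This is the single input that will power both parts.

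\textbf{Part (a).} Write $B$ for the first $m$ and $A$ for the last $d-m$ qubits. Haar invariance yields $\expect{V}{V\rho(x)V^\dagger}=\mathrm{id}_N/N$, so $\expect{V}{\tilde\rho^V_m(x)}=2^{-m}\mathrm{id}_{2^m}$. For the fluctuations I compute the second moment using the key identity (Page's formula), obtaining
\[
\expect{V}{\trace{\tilde\rho^V_m(x)^2}} \;=\; \frac{2^m+2^{d-m}}{2^d+1} \;=\; 2^{-m}+O(2^{m-d}),
\]
hence $\expect{V}{\|\tilde\rho^V_m(x)-2^{-m}\mathrm{id}\|_{\mathrm{HS}}^2}=O(2^{m-d})$. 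Markov's inequality (or, if one wants uniformity in $x$, Levy's lemma on $U(N)$ with the Lipschitz constant of $V\mapsto \tilde\rho^V_m(x)$ plus a covering argument) then gives $\|\tilde\rho^V_m(x)-2^{-m}\mathrm{id}\|_{\mathrm{HS}}=O(2^{-d/2})$ with high probability. For the kernel, decompose $\tilde\rho^V_m(x)=2^{-m}\mathrm{id}+\Delta(x)$ with $\Delta(x)$ traceless; then $q^V_m(x,x')=2^{-m}+\trace{\Delta(x)\Delta(x')}$, and Cauchy-Schwarz applied to the $\mathrm{HS}$ bound yields the claimed (in fact sharper) deviation.

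\textbf{Part (b).} From \eqref{eq:part_trace_rewriting} applied to the projected embedding, the operator on Hermitian matrices reads $T^V_{\mu,m}(M)=\partialtrace{1}{O^V_{\mu,m}(M\otimes\mathrm{id}_{2^m})}$ with $O^V_{\mu,m}=\int\tilde\rho^V_m(y)\otimes\tilde\rho^V_m(y)\,\mu(\d y)=\partialtrace{A_1A_2}{V^{\otimes 2}O_\mu V^{\dagger\otimes 2}}$. Taking the Haar average and substituting the key formula,
\[
\expect{V}{O^V_{\mu,m}} \;=\; \frac{\partialtrace{A_1A_2}{\mathrm{id}_{N^2}}+\partialtrace{A_1A_2}{F_N}}{N(N+1)}.
\]
After reordering $\mc{H}^{\otimes 2}\cong\mc{H}_B^{\otimes 2}\otimes \mc{H}_A^{\otimes 2}$ one has $F_N=F_B\otimes F_A$, so $\partialtrace{A_1A_2}{\mathrm{id}_{N^2}}=2^{2(d-m)}\mathrm{id}_{(2^m)^2}$ and $\partialtrace{A_1A_2}{F_N}=2^{d-m}F_B$. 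Plugging in and using the elementary identities $\partialtrace{1}{\mathrm{id}(M\otimes\mathrm{id})}=\trace{M}\mathrm{id}_{2^m}$ and $\partialtrace{1}{F_B(M\otimes\mathrm{id})}=M$,
\[
\expect{V}{T^V_{\mu,m}}(M) \;=\; \frac{1}{N+1}\left(\frac{N\trace{M}}{2^{2m}}\,\mathrm{id}_{2^m}+\frac{M}{2^m}\right).
\]
This is block-diagonal in the decomposition $M=\lambda\mathrm{id}+M_0$ with $\trace{M_0}=0$: on $\mathrm{id}_{2^m}$ it acts by $2^{-m}$, and the corresponding eigenfunction $\trace{\tilde\rho^V_m(\cdot)\mathrm{id}}=1$ is constant; on the $(2^{2m}-1)$-dimensional traceless sector it acts by $1/(2^m(2^d+1))=2^{-m-d}+O(2^{-2d})$, yielding the claimed spectrum.

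\textbf{Main obstacle.} Part (b) is essentially careful bookkeeping once the two-copy Haar formula and the splitting $F_N=F_B\otimes F_A$ under the reordering of tensor factors are in hand. The real subtlety lies in part (a): turning the variance bound into a high-probability statement that holds \emph{uniformly in $x,x'$} (rather than pointwise) requires either Levy's lemma on $U(N)$ with a sharp Lipschitz constant, or a covering-plus-union-bound argument, and one must track the non-independence of $\tilde\rho^V_m(x)$ and $\tilde\rho^V_m(x')$ when extracting the kernel bound from the operator bound.
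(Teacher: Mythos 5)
Your proof is correct, and it takes a genuinely different route from the paper's. The paper works entirely in index notation: it writes out the degree-two Weingarten formula as a sum of four products of Kronecker deltas, evaluates each term $A_1,\dots,A_4$ for $\expect{V}{\tilde\rho^V_m(x)_{\alpha_1\alpha_2}\tilde\rho^V_m(y)_{\beta_1\beta_2}}$, bounds the entrywise variance for part (a), and reads off the matrix elements of $\expect{V}{T^V_{\mu,m}}$ for part (b), identifying the operator as a multiple of the identity plus a rank-one perturbation up to $O(2^{-2d})$ corrections. You instead invoke the basis-free twirling identity $\expect{V}{V^{\otimes 2}O_\mu V^{\dagger\otimes 2}}=\frac{\mathrm{id}+F_N}{N(N+1)}$ (using $\trace{O_\mu}=\trace{F_N O_\mu}=1$ for pure-state embeddings), the factorization $F_N\cong F_B\otimes F_A$ under reordering, and the partial-trace identities $\partialtrace{1}{\mathrm{id}(M\otimes\mathrm{id})}=\trace{M}\mathrm{id}$ and $\partialtrace{1}{F_B(M\otimes\mathrm{id})}=M$; for part (a) you replace the entrywise variance computation by Page's formula for the expected purity. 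What your approach buys: part (b) becomes an exact computation — the top eigenvalue is exactly $2^{-m}$ and the remaining $2^{2m}-1$ eigenvalues are exactly $2^{-m}/(2^d+1)$ — so the $O(2^{-2d})$ error terms in the theorem statement are seen to be artifacts of the paper's early truncation rather than genuine corrections; the swap-operator bookkeeping is also far less error-prone than tracking barred and unbarred Greek indices. What the paper's approach buys: the explicit delta calculus extends mechanically to the fourth-moment computation needed in the (unpublished-in-main-text) variance bound for $T^V_{\mu,m}$ itself, where the commutant is six-dimensional and the operator-algebraic shortcut is less clean. Your closing remark about uniformity in $x,x'$ flags a real gap, but it is one the paper's own proof shares — the paper's concentration statement is likewise pointwise in $x$ — so it does not count against you here.
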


The averaged integral operator in the second part of the result 
is not directly meaningful, however it gives some indication of the
behavior of the operators $T^V_{\mu,m}$.
In particular, we expect a similar result to hold for most $V$ if the mean embedding $\rho_\mu$
is sufficiently mixed. A proof of this result would require us to bound
the variance of the matrix elements of $T^{V}_{\mu,m}$ which is possible using
standard formula for expectations of polynomials over the unitary group but lengthy.

Note that the dimension of the RKHS for the biased kernel $q_m^V$ with $m$-qubits is bounded by $4^m$. This implies that learning is possible when projecting to sufficiently low dimensional biased kernels such that the training sample size satisfies $n \gtrsim 4^m\geq \mathrm{dim}(\mc{F})$.

Let us now focus on the case $m=1$, that is the biased kernel is solely defined via the first qubit. Assuming that Theorem \ref{thm:biased_kernels}b) also holds for fixed $V$ we can assume that the biased kernel has the form 
\begin{align}\label{eq:spectral_dec}
    q(x,x') \equiv q_1^V(x,x') = \gamma_0 \phi_0(x)\phi_0(x') + \sum\nolimits_{i=1}^3 \gamma_i \phi_i(x)\phi_i(x'),
\end{align}
where $\gamma_0 = 1/2 + O(2^{-2d})$
and $\phi_0(x)=1$ is the constant function up to terms 
of order $O(2^{-2d})$.
For $i=1,2,3$ we have $\gamma_i = O(2^{-d-1}) = O(2^{-d})$  (Fig.~\ref{fig:spectrum_generalization}) and $\phi_i$ is a function that conjectured to be exponentially hard in $d$ to compute classically \citep{quantumcomputationalsupremacy}. It is thus impossible to include a bias towards those three eigenfunctions classically. On the other hand we can include a strong bias towards the constant eigenfunction also classically. The straightforward way to do this is to center the data in the RKHS (see Appendix \ref{app:kernels} for details).

\begin{figure}
\centering
  \subfigure{\includegraphics[height=13.em]{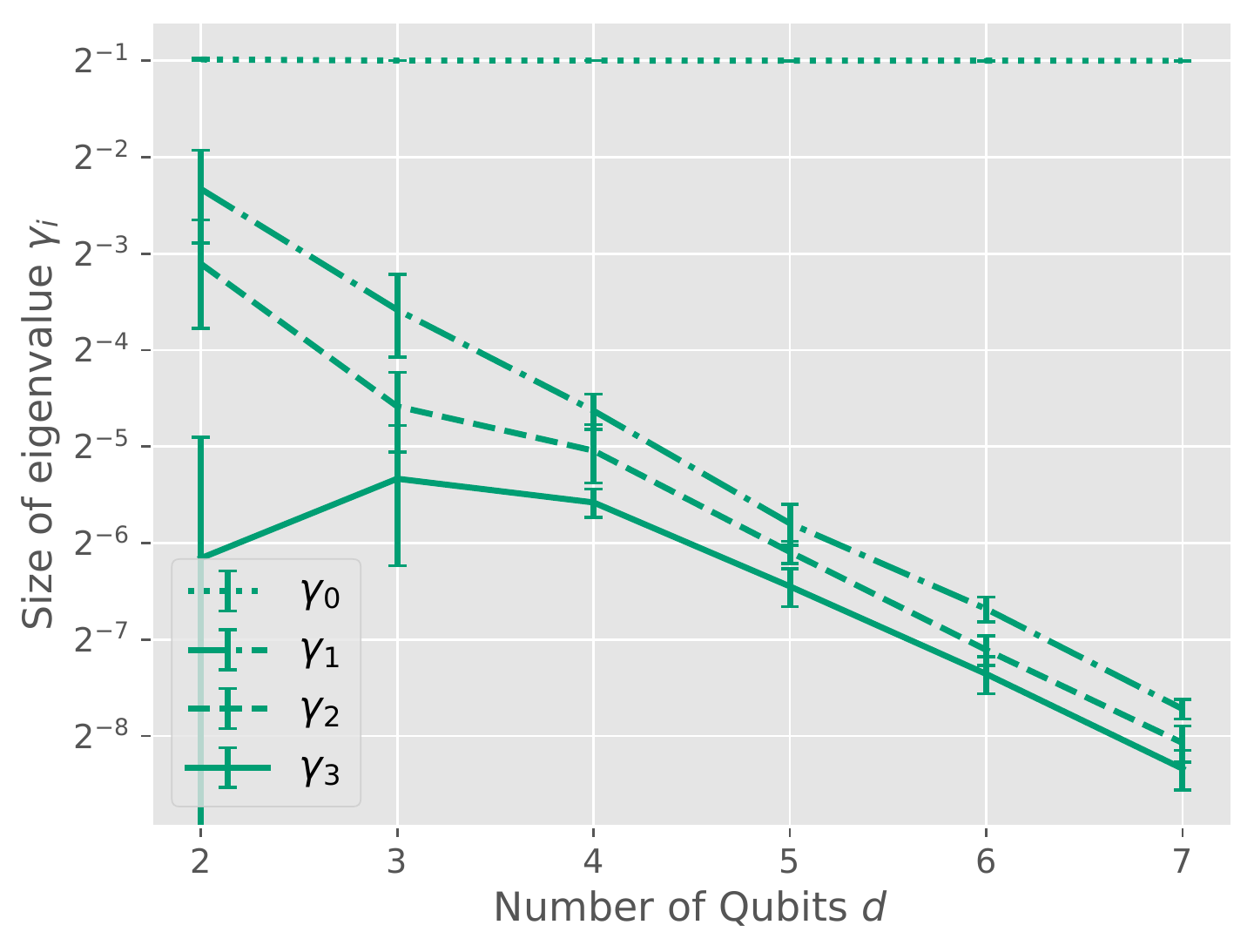}}
\centering
\subfigure{\includegraphics[height=13.em]{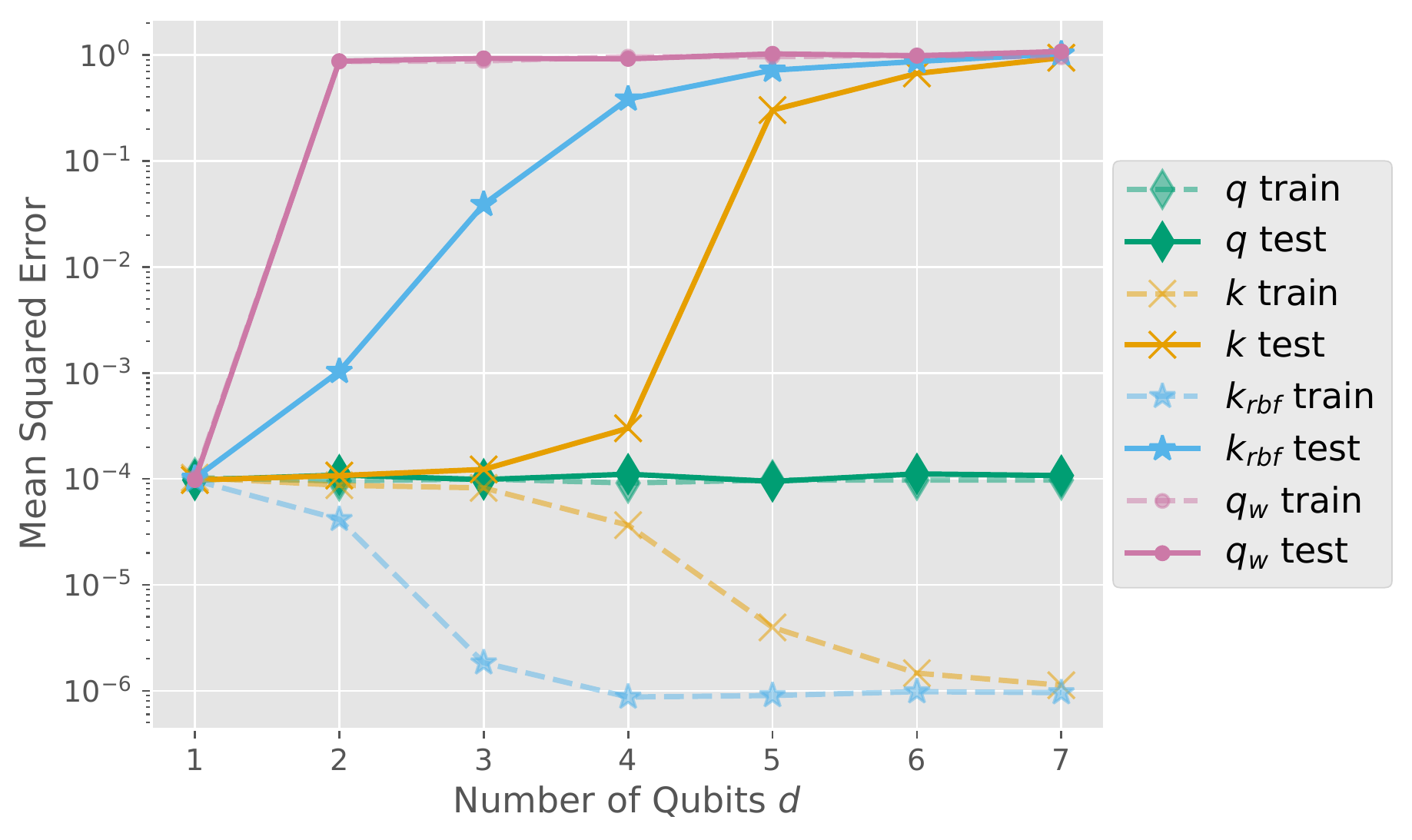}}
  \caption{\textbf{Left:} Spectral behavior of biased kernel $q$, see Theorem \ref{thm:biased_kernels}b) and Equation \eqref{eq:spectral_dec} \textbf{Right:} The biased kernel $q$, equipped with prior knowledge, easily learns the function for arbitrary number of qubits and achieves optimal mean squared error (MSE). Models that are ignorant to the structure of $f^*$ fail to learn the function. The classical kernel $k_\text{rbf}$ and the full quantum kernel overfit (they have low training error, but large test error). The biased kernel on the wrong qubit $q_w$ has litle capacity with the wrong bias and thus underfits (training and test error essentially overlap).
  }\label{fig:spectrum_generalization}
\end{figure}

 \paragraph{Barren plateaus.} 
Another conclusion from Theorem \ref{thm:biased_kernels}a) is that the fluctuations of the reduced density matrix
around its mean are exponentially vanishing in the number of qubits. In practice the value
of the kernel would be determined by measurements and exponentially many measurements are necessary to obtain exponential accuracy of the kernel function. Therefore the theorem suggests that it is not possible in practice
to learn anything beyond the constant function from generic biased kernels 
for (modestly) large values of $d$. This observation is closely related to the 
fact that for many quantum neural networks architectures, the 
gradient of the parameters with respect to the loss is exponentially vanishing 
with the system size $d$, a phenomenon known as \emph{Barren plateaus}
\cite{mcclean2018barren, cerezo2020cost}.

\section{Experiments}\label{sec:experiments}
Since for small $d$ we can simulate the biased kernel efficiently, we illustrate our theoretical findings in the following experiments. 
Our implementation, building on standard open source packages \citep{scikit-learn, bergholm2018pennylane}, is available online.\footnote{\url{https://github.com/jmkuebler/quantumbias}} We consider the case described above where we \emph{know} that the data was generated by measuring an observable on the first qubit, i.e., $f^*(x)=\trace{\tilde\rho^V_1(x)M^*}$, but we do not know $M^*$, see Fig.~\ref{fig:circuits}. We use the full kernel $k$ and the biased kernel $q$ for the case $m=1$. To show the effect of selecting the wrong bias, we also include the behavior of a biased kernel defined only on the second qubit, denoted as $q_w$. As a classical reference we also include the performance of a radial basis function kernel $k_\text{rbf}(x,x') = \exp(-\|x-x'\|^2/2)$. 
For the experiments we fix a single qubit observable $M^* = \sigma_z$ and perform the experiment for varying number $d$ of qubits. 
First we draw a random unitary $V$. 
The dataset is then generated by drawing $N=200$ realizations $\{x^{(i)}\}_{i=1}^N$ from the $d$ dimensional uniform distribution on $[0,2\pi]^d$. We then define the labels as $y^{(i)} = cf^*(x^{(i)}) + \epsilon^{(i)}$, where $f^*(x)= \trace{\tilde{\rho}^V(x)\sigma_z}$, $\epsilon^{(i)}$ is Gaussian noise with $\text{Var}[\epsilon]= 10^{-4}$, and $c$ is chosen such that $\text{Var}[f(X)]  =1$. 
Keeping the variances fixed ensures that we can interpret the behavior for varying $d$. 

We first verify our findings from Theorem \ref{thm:biased_kernels}b) and  Equation \eqref{eq:spectral_dec} by estimating the spectrum of $q$. Fig.~\ref{fig:spectrum_generalization} (left) shows that Theorem \ref{thm:biased_kernels}b) also holds for individual $V$ with high probability.
We then use $2/3$ of the data for training kernel ridge regression (we fit the mean seperately) with preset regularization, and use $1/3$ to estimate the test error. We average the results over ten random seeds (random $V$, $x^{(i)}, \epsilon^{(i)}$) and results are reported in the right panel of Fig.~\ref{fig:spectrum_generalization}. This showcases that as the number of qubits increases, it is impossible to learn $f^*$ without the appropriate spectral bias. $k$ and $k_\text{rbf}$ have too little bias and overfit, whereas $q_w$ has the wrong bias and severly underfits. The performance of $q_w$ underlines that randomly biasing the kernel does not significantly improve the performance over the full kernel $k$.
In the appendix we show that this is not due to a bad choice of regularization, by reporting cherry-picked results over a range of regularizations.

To further illustrate the spectral properties, we empirically estimate the kernel target alignment \citep{Cristianini2006} and the task-model alignment \citep{Canatar2021} that we introduced in Sec.~\ref{sec:supervised_learning}. By using the centered kernel matrix (see App.~\ref{app:kernels}) and centering the data we can ignore the first eigenvalue in \eqref{eq:spectral_dec} corresponding the constant function. 
In Figure \ref{fig:exp_2} (left) we show the empirical (centered) kernel target alignment for 50 random seeds. The biased kernel is the only one well aligned with the task. The right panel of Fig.~\ref{fig:exp_2} shows the task model alignment. This shows that $f^*$ can be completely expressed with the first four components of the biased kernel, while the other  kernels essentially need the entire spectrum (we use a sample size of 200, hence the empirical kernel matrix is only 200 dimensional) and thus are unable to learn. Note that the kernel $q_w$ is four dimensional, and so higher contributions correspond to functions outside its RKHS that it actually cannot even learn at all.

\begin{figure}  
\centering
    \subfigure{\includegraphics[height=14em]{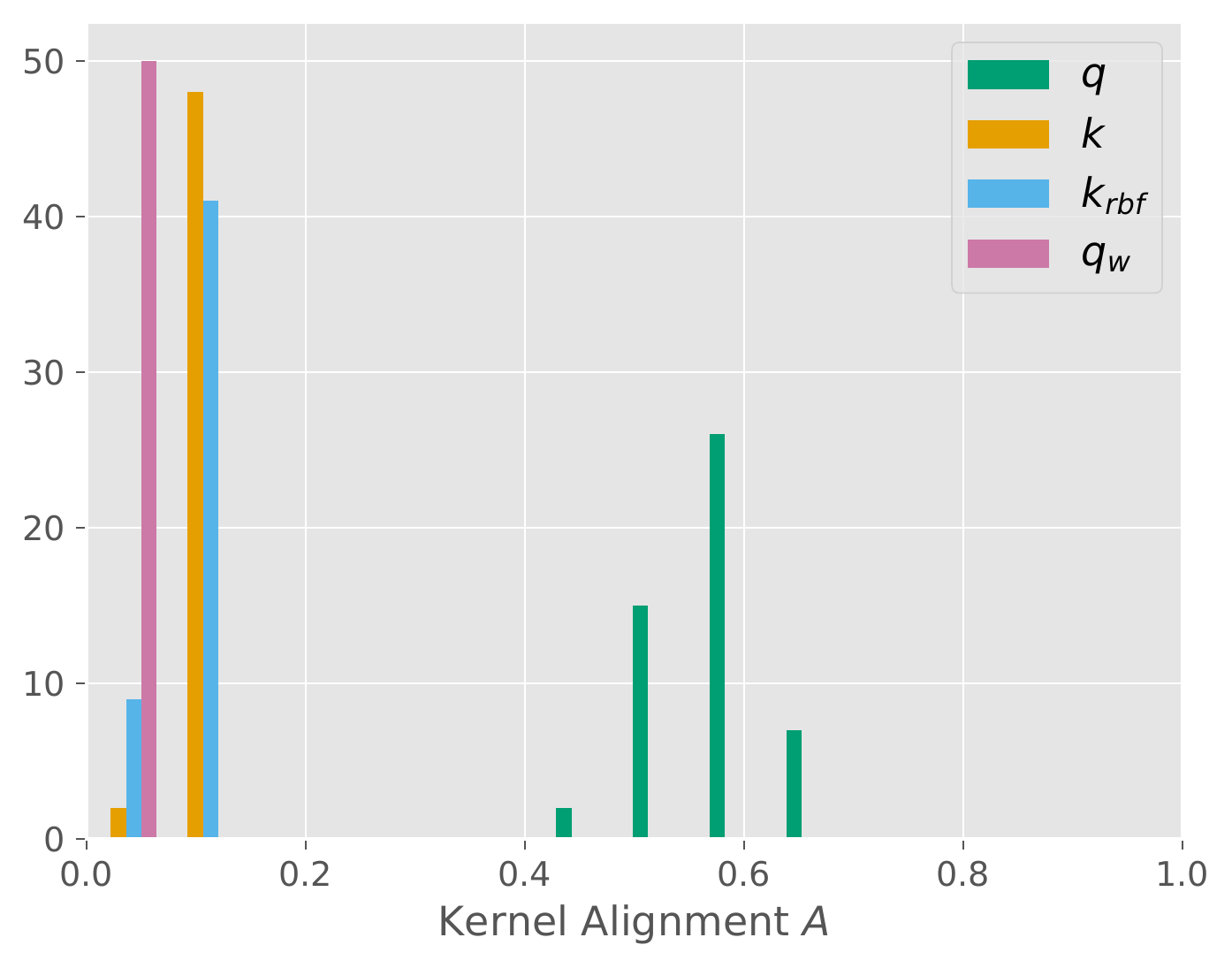}}
  \centering
  \subfigure{\includegraphics[height=14em]{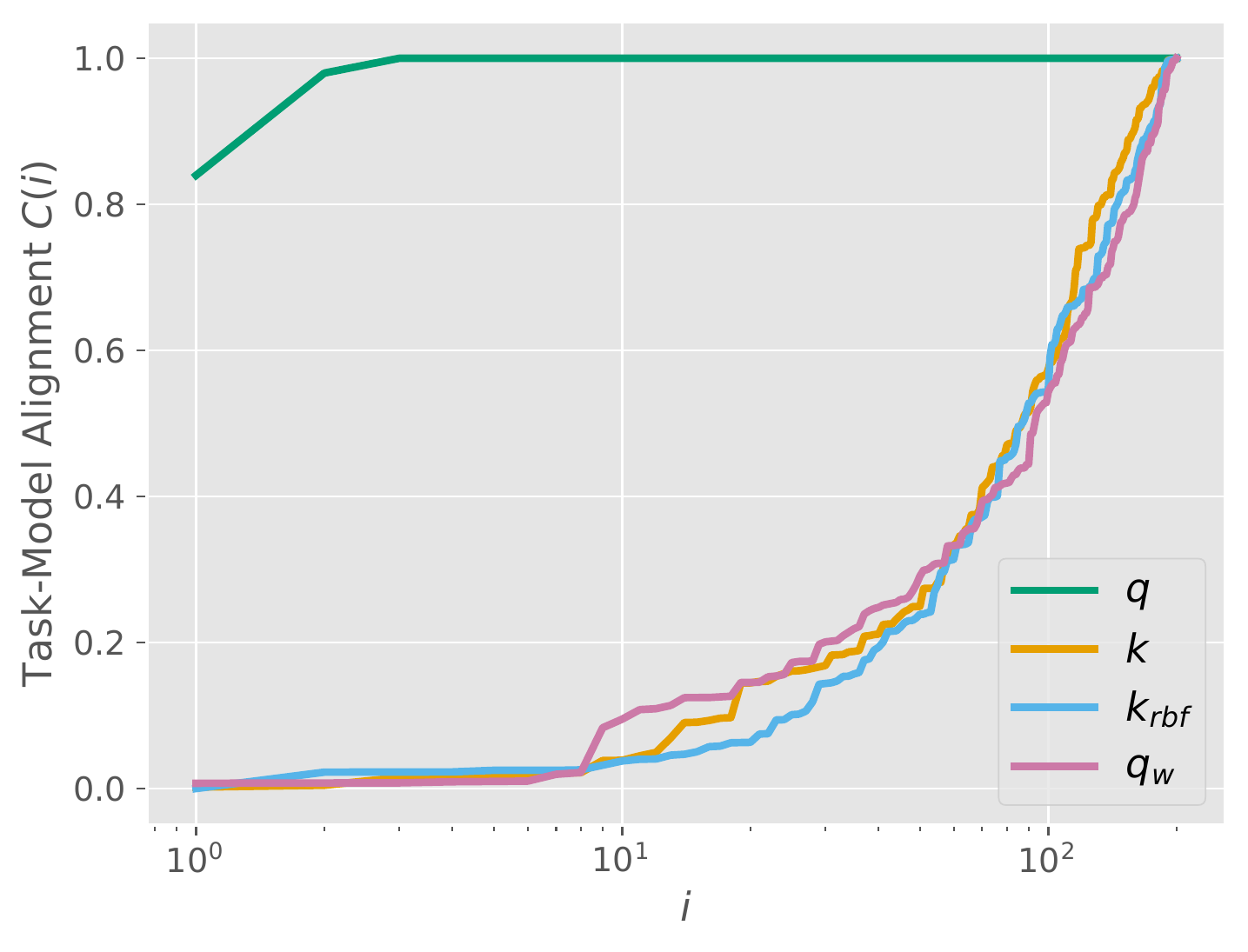}}
  \caption{Histogram of the kernel target alignment over 50 runs (left) and task model alignment (right) for $d=7$.}\label{fig:exp_2}
\end{figure}

\section{Discussion}\label{sec:discussion}
We provided an analysis of the reproducing kernel Hilbert space (RKHS) and the inductive bias of quantum kernel methods. Rather than the dimensionality of the RKHS, its spectral properties determine whether learning is feasible. 
Working with exponentially large RKHS comes with the risk of having a correspondingly small inductive bias. This situation indeed occurs for naive quantum encodings, and hinders learning unless datasets are of exponential size.
To enable learning, we necessarily need to consider models with a stronger inductive bias. Encoding a bias towards continuous functions is likely not a promising path for a quantum advantage, as this is where classical machine learning models excel.

Our results suggest that we can only achieve a quantum advantage if we \emph{know} something about the data generating process and cannot efficiently encode this classically, yet are able use this information to bias the quantum model. 
We indeed observe an exponential advantage in the case where we know that the data comes from a single qubit observable and constrain the RKHS accordingly. However, we find that evaluating the kernel requires exponentially many measurements, an issue related to Barren Plateaus encountered in quantum neural networks.

With fully error-corrected quantum computers it becomes feasible to define kernels with a strong bias that do not require exponentially many measurements. An example of this kind was recently presented by \citet{liu2020rigorous}: here one knows that the  target function is extremely simple after computing the discrete logarithm. A quantum computer is able to encode this inductive bias by using an efficient algorithm for computing the discrete logarithm. 

However, even for fully coherent quantum computers it is unclear how we can reasonably encode a strong inductive bias about a classical dataset (e.g., images of cancer cells, climate-data, etc.).
The situation might be better when working with \emph{quantum data}, i.e., data that is collected via observations of systems at a quantum mechanical scale.
To summarize, we conclude that there is no indication that quantum machine learning will substantially improve supervised learning on classical datasets.

\begin{ack}
    The authors thank the anonymous reviewers for their helpful comments that made the theorems and their proofs more accessible.
    This work was in part supported by the German Federal Ministry of Education and Research (BMBF) through the Tübingen AI Center (FKZ: 01IS18039B) and the Machine Learning Cluster of Excellence, EXC number 2064/1 – Project number 390727645.
\end{ack}

\bibliography{library}
\bibliographystyle{unsrtnat}

\newpage
\appendix

\begin{center}
{\LARGE \ourtitle} \\[4mm]
{\Large Supplementary Material}
\vspace{3mm}
\end{center}

\section{Partial trace in quantum mechanics}\label{app:partial_trace}
Here we provide the definition of the partial trace used for the biased quantum kernels. For details we refer to \cite{nielsen2010}.
The state space of the union of two quantum systems with state space $\mc{H}_1$ and $\mc{H}_2$ is given by the tensor product
$\mc{H}_1\otimes \mc{H}_2$. A general mixed state is described by a density matrix $\rho_{12}$ which is hermitian positive linear operator on $\mc{H}_1\otimes \mc{H}_2$ with $\trace{\rho_{12}}=1$. The state $\rho_1$ on the subsystem $\mc{H}_1$
is obtained by the partial trace operation $\rho_1 = \partialtrace{2}{\rho_{12}}$.
The partial trace can be defined as the  linear map
$\text{Tr}_2: \mc{L}(\mc{H}_1\otimes \mc{H}_2) \to \mc{L}(\mc{H}_1)$
that satisfies 
\begin{align}
    \partialtrace{2}{S\otimes T} = \trace{T} S
\end{align}
for all $S\in \mc{L}(\mc{H}_1)$, $T\in \mc{L}(\mc{H}_2)$.
It can be shown that this map exists and is unique. Picking a basis on $\mc{H}_1$
and $\mc{H}_2$ we consider the tensor product basis on $\mc{H}_1\otimes \mc{H}_2$. In coordinates given by this basis we can write
\begin{align}
(\rho_1)_{i_1j_1} = \partialtrace{2}{\rho_{12}}_{i_1j_1}=\sum_{k=1}^{\text{dim}(\mc{H}_2)}
(\rho_{12})_{i_1 k, j_1 k}.
\end{align}

For completeness and to illustrate the handling of the partial trace we prove the last equality in \eqref{eq:part_trace_rewriting}. We want to show that for $S\in \mc{L}(\mc{H}_1\otimes \mc{H}_2)$ and $T \in \mc{L}(\mc{H}_1)$
the identity 
\begin{align}
    \trace{S(T\otimes \mathrm{id}} = \trace{\partialtrace{2}{S}T}
\end{align}
holds. We assume first that $S=A\otimes B$ for some $A\in \mc{L}(\mc{H}_1)$
and $B\in \mc{L}(\mc{H}_2)$. Then, by definition, 
\begin{align}
    \begin{split}
    \trace{\partialtrace{2}{S}T}
    &=
    \trace{\partialtrace{2}{A\otimes B}T}
    =\trace{A T}\trace{B}
    \\
    &= \trace{AT\otimes B}
    = \trace{(A\otimes B)(T\otimes \mathrm{id})}
    =\trace{S(T\otimes \mathrm{id}}.
    \end{split}
\end{align}
Here we used that the trace of a tensor product is the product of the traces.
For general $S$ the statement now follows from the linearity of both sides in $S$.

\section{General results about RKHS}\label{app:kernels}
In this section we briefly discuss basic results on centering in RKHS and the
RKHS of tensor product kernels.

\subsection{Centering in the RKHS}\label{app:Centering}
As shown in Section~\ref{sec:biased_kernel}, the constant function plays a special role for typical biased kernels as the corresponding eigenvalue is much larger than the remaining eigenvalues. Clearly, it is also possible classically to treat the constant function separately. To do so, it is natural to 
center the data by subtracting the mean $\bar{y}=n^{-1}\sum_{i=1}^ny_i$
and to consider the  \emph{centered} kernel. This corresponds to putting no penalty on the constant function which is also common in linear models where no penalty is put on the intercept. Formally, for a kernel $k$, the centered kernel is defined as 
\begin{align}
   k_c(x,x') = k(x,x') - \expect{X}{k(X,x')} - \expect{X'}{k(x,X')} + \expect{X,X'}{k(X,X')}. 
\end{align} In analogy we can center the kernel matrix as $K_c(X,X) = \left(\mathrm{id} - \frac{1}{n} \mathbf{1}\mathbf{1}^\top\right)K(X,X)\left(\mathrm{id} - \frac{1}{n} \mathbf{1}\mathbf{1}^\top\right)$, where $\mathbf{1}$ is a vector of all ones.

Let $k$ be a kernel with Mercer decomposition 
\begin{align}
    k(x,x') = \sum \gamma_i \phi_i(x)\phi_i(x'),
\end{align}
and define $a_i = \int \phi_i(x)\mu(\d x)$. Then the centered kernel can be written as
\begin{align}\label{eq:mercer_shifted}
    k_c(x,x') = \sum \gamma_i (\phi_i(x)-a_i)(\phi_i(x')-a_i).
\end{align}

To make things explicit let us focus on the biased kernel of Equation \eqref{eq:spectral_dec}. Ignoring terms of order $\mathcal{O}(2^{-2d})$, the constant function is an eigenfunction of the kernel. In such a case centering corresponds to setting the corresponding eigenvalue $\gamma_0$ to zero, while the other terms in the spectral decomposition are invariant under centering (by orthogonality we have $a_i = 0$ for $i\neq 0$). The centered biased kernel of Eq.~\eqref{eq:spectral_dec} is thus
\begin{align}
    q_{c}(x,x') = \sum_{i=1}^3 \gamma_i \phi_i(x)\phi_i(x').
\end{align}
By Theorem \ref{thm:biased_kernels} we expect that all the eigenvalues of the centered biased kernel are similarly large. Further we know that the centered part of the target function can completely be expressed in terms of the eigenfunctions of the centered biased kernel $f^*(x) - \bar{f}^* = \sum_{i=1}^3 a_i \phi_i(x)$, where $\bar{f}^* = \expect{}{f^*(X)}$. Let us assume that all the three eigenvalues are completely equal. Then we can compute the kernel target alignment of Eq.~\eqref{eq:KT_alignment}
\begin{align}
    A(q_{c}, f^*-\bar{f^*}) =  \frac{\sum_{i=1}^3 \gamma a_i^2}{(\sum_{i=1}^3 \gamma^2)^{1/2} \sum_{i=1}^3 a_i^2} = \frac{\gamma \sum_{i=1}^3 a_i^2 }{\sqrt{3}\gamma \sum_{i=1}^3 a_i^2} = \frac{1}{\sqrt{3}} \approx 0.58.
\end{align}
We emphasize that this expectation is in good accordance with the results of our experiments reported in Fig.~\ref{fig:exp_2}.
Further,  note that computing the kernel target alignment after centering is quite common in the kernel literature and is used to optimize the kernel function \citep{cortes12aalignment}.

\subsection{Tensor product of kernels}
In this section we describe the construction of product kernels on product spaces. More details can be found in any textbook on RKHS \cite{SchSmo02}.
Let $(X_1,k_1)$ and $(X_2,k_2)$ be two spaces with positive definite kernels
with RKHS $\mc{F}_1$ and $\mc{F}_2$.
Then the function
\begin{align}
    k((x_1, x_2),(y_1,y_2)) = k_1(x_1, y_1)k_2(x_2, y_2)
\end{align}
defines a positive definite kernel on $X_1 \times X_2$
and the RKHS is given by $\{f_1(x_1)f_2(x_2):\, f_1\in \mc{F}_1, f_2\in \mc{F}_2\}$. 
Morevoer, if we are given a product measure $\mu = \mu_1 \otimes \mu_2$
on $X_1\times X_2$ then the integral operator for the kernel $k$ factorizes, i.e., for functions $f(x_1,x_2)=f_1(x_1)f_2(x_2)$
\begin{align}
\begin{split}
   (Kf)(x_1, x_2) &= \int f(y)k(y, x) \, \mu(\d y)
   \\
   &= \int f_1(y_1)k_1(y_1,x_1)\, \mu_1(\d x_1)
   \int f_2(y_2)k_2(y_2,x_2)\, \mu_1(\d x_2)
   \\
   &= (K_1f_1)(x_1)(K_2f_2)(x_2).
   \end{split}
\end{align}
Therefore the eigenvalue problems of the integral operators decouple and the eigenvalues of $K$ are given by $\{\gamma^1\gamma^2: \gamma^1\in E_1, \gamma^2\in E_2\}$ where $E_i$ denotes the eigenvalues of $K_i$.

It can be derived from the results above that the RKHS of the product kernel 
$k(x,x')=k_1(x,x')k_2(x,x')$ on $X$ is given by 
$\{f_1(x)f_2(x):\, f_1\in \mc{F}_1, f_2\in \mc{F}_2\}$
where $\mc{F}_i$ denotes the RKHS of $(X,k_i)$. There is no simple relation for the integral operators.

\section{More details on quantum kernels for classical data}\label{app:more_details}

In this section we analyze in more detail the properties of quantum kernel methods for classical data. 

\subsection{Description of the RKHS}
To understand the quantum kernel better we give a description of the RKHS for the quantum kernels. We consider the one-qubit embedding $x\to a(x)|0\rangle + b(x)|1\rangle$.
The RKHS $\tilde{\mc{F}}$ corresponding to the (non-physical) kernel $\tilde{k}(x,y)=\langle \varphi(x), \varphi(y)\rangle$ 
is then generated by $a(x), b(x)$.
Moreover, the RKHS corresponding to the physical kernel $k(x,x')
= \trace{\rho(x)\rho(x'))}=|\langle \varphi(x),\varphi(x')\rangle|^2=|\tilde{k}(x,x')|^2= \tilde{k}(x,x')\bar{\tilde{k}}(x,x')$
is the vector space $\mc{F}$ generated by $\{f\cdot \bar{g}:\, f, g\in \tilde{F}\}$ \cite{paulsen_raghupathi_2016} (to obtain the real valued RKHS which is more relevant in the learning theoretic setting we consider the real and imaginary part).
This result can also be obtained by looking at the feature map $x\to \rho(x)$ of the physical kernel directly.
When we consider data from $\R^d$ where all dimensions are  encoded independently in a single qubit the resulting RKHS has the tensor product structure $\mc{F}=\bigotimes \mc{F}_i$
where $\mc{F}_i$ are the RKHS for the single coordinate embeddings.

\subsection{Proof of Lemma~\ref{le:largest_ev}}\label{sec:proof_lemma_1}
Here we analyze the integral operators in a bit more detail and prove Lemma~\ref{le:largest_ev}.
For the proof of Lemma~\ref{le:largest_ev} we need to briefly look again at the simpler non-physical kernel $\tilde{k}(x,y)=\langle \varphi(x), \varphi(y)\rangle$ and its integral operator.
 Suppose data has distribution $\mu$ on $\R$. We consider the integral operator $\tilde{K}$
acting on $f(x)=\langle \omega, \varphi(x)\rangle$ defined by
\begin{align}
    \tilde{K}f(x) = \int f(y)\tilde{k}(y, x) \, \mu(\d y) 
    = \int \langle \omega, \varphi(y)\rangle\langle \varphi(y), \varphi(x)\rangle\, \mu(\d y)
    =   \langle \omega, \rho_\mu \varphi(x)\rangle
\end{align}
where $\rho_\mu=\int | \varphi(y)\rangle\langle \varphi(y)|\, \mu(\d y)$ denotes the mean density matrix 
associated with the measure $\mu$. We observe that the eigenvalues $\tilde\gamma_i$ of $\tilde{K}$ agree with the the eigenvalues
of the density matrix $\rho_\mu$. In particular we conclude 
\begin{align}\label{eq:relation_HS}
    \lVert \tilde{K}\rVert_{HS}^2 = \sum \tilde\gamma_i^2 = \lVert \rho_\mu \rVert_{HS}^2=
    \trace{\rho_\mu^2}
\end{align}
where $\lVert \cdot \rVert_{HS}$ denotes the Hilbert-Schmidt norm (which for symmetric matrices agrees with the Frobenius norm).
This observation corresponds to the fact that for the linear kernel the eigenvalues of the integral operator agree with the eigenvalues of the covariance matrix.

Now we can give the simple proof of Lemma~\ref{le:largest_ev}.
For convenience we restate the lemma.
\begin{lemma}[Lemma~\ref{le:largest_ev} in the main part]
The largest eigenvalue $\gamma_{max}$ of $K$ satisfies the bound $\gamma_{max} \leq \sqrt{\trace{\rho_\mu^2}}$.
\end{lemma}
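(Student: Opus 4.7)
The plan is to bound $\gamma_{\max}$ using the Hilbert--Schmidt norm of $K$ and then exploit the fact that the quantum kernel is pointwise bounded between $0$ and $1$. The key observation is that density matrices satisfy $\operatorname{Tr}[\rho(x)^2]\le 1$, so by the Cauchy--Schwarz inequality for the Hilbert--Schmidt inner product, $0 \le k(x,x') = \operatorname{Tr}[\rho(x)\rho(x')] \le \sqrt{\operatorname{Tr}[\rho(x)^2]\operatorname{Tr}[\rho(x')^2]} \le 1$ (positivity follows because the product of two PSD operators has nonnegative trace).

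Concretely, I would proceed in three short steps. First, since the eigenvalues $\gamma_i$ of $K$ are non-negative and satisfy $\sum_i \gamma_i^2 = \lVert K\rVert_{HS}^2$, we have the elementary bound
\begin{align*}
\gamma_{\max}^2 \;\le\; \sum_i \gamma_i^2 \;=\; \lVert K\rVert_{HS}^2 \;=\; \int\!\!\int k(x,x')^2\,\mu(\mathrm{d}x)\,\mu(\mathrm{d}x').
\end{align*}
Second, using the pointwise bound $0\le k(x,x')\le 1$ derived above, we obtain $k(x,x')^2 \le k(x,x')$, so
\begin{align*}
\lVert K\rVert_{HS}^2 \;\le\; \int\!\!\int k(x,x')\,\mu(\mathrm{d}x)\,\mu(\mathrm{d}x') \;=\; \int\!\!\int \operatorname{Tr}[\rho(x)\rho(x')]\,\mu(\mathrm{d}x)\,\mu(\mathrm{d}x').
\end{align*}
Third, pulling the integrals inside the trace (by linearity) identifies the right-hand side with $\operatorname{Tr}[\rho_\mu^2]$, since $\rho_\mu = \int \rho(y)\,\mu(\mathrm{d}y)$. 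Taking square roots yields the claimed bound $\gamma_{\max}\le \sqrt{\operatorname{Tr}[\rho_\mu^2]}$.

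There is no genuine obstacle here: the argument is essentially a three-line chain of inequalities. The one place that requires a sanity check is the pointwise bound $k(x,x')\le 1$, which rests on $\rho(x)$ being a density matrix (hence $\operatorname{Tr}[\rho(x)^2]\le\operatorname{Tr}[\rho(x)]^2=1$); this holds for pure-state embeddings with equality, and for mixed states it only makes the bound strictly tighter. The step $k^2\le k$ is the essential ingredient that turns the $L^2$ norm of the kernel into the quantity $\operatorname{Tr}[\rho_\mu^2]$, relating the integral operator's spectral size directly to the purity of the mean embedding and connecting back to the non-physical kernel identity $\lVert \tilde{K}\rVert_{HS}^2 = \operatorname{Tr}[\rho_\mu^2]$ recorded in equation~\eqref{eq:relation_HS}.
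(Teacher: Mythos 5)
Your proof is correct, but it reaches the bound by a different route than the paper. The paper's proof fixes a normalized top eigenfunction $f$, uses the Mercer decomposition together with $k(x,x)=1$ to get the sup-norm bound $|f(x)|\leq \gamma_{max}^{-1/2}$, and then estimates $\gamma_{max}=\int f(x)k(x,y)f(y)\,\mu(\d x)\mu(\d y)\leq \gamma_{max}^{-1}\int k(x,y)\,\mu(\d x)\mu(\d y)$ using pointwise positivity of $k$. You instead bypass the eigenfunction entirely: $\gamma_{max}^2\leq \lVert K\rVert_{HS}^2=\int k^2\,\d\mu\,\d\mu\leq \int k\,\d\mu\,\d\mu$ via $0\leq k\leq 1$. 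Both arguments reduce to the same intermediate inequality $\gamma_{max}^2\leq \int k(x,y)\,\mu(\d x)\mu(\d y)$ and both identify that double integral with $\trace{\rho_\mu^2}$ (the paper does so through the non-physical kernel identity $\int|\tilde k|^2=\lVert\tilde K\rVert_{HS}^2$, you do so directly by linearity of the trace, which amounts to the same computation). What your version buys is a slight gain in simplicity and generality: it needs only $\trace{\rho(x)^2}\leq 1$, so it covers mixed-state embeddings where $k(x,x)<1$, whereas the paper's eigenfunction bound as written uses $k(x,x)=1$ exactly. What the paper's version buys is the explicit boundedness of the top eigenfunction, which is a reusable observation (it reappears in the proof of Theorem~\ref{th:learning_bound}). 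Your step $\sum_i\gamma_i^2=\lVert K\rVert_{HS}^2=\int k^2\,\d\mu\,\d\mu$ is the same standard fact the paper invokes in that later proof, so nothing you use is outside the paper's toolkit.
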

\begin{proof}
We observe, denoting the constant function with value 1 by $\bs{1}$,
\begin{align}
    \int \bs{1}(x) k(x,y)\bs{1}(y)\, \mu(\d x)\mu(\d y)
    = \int |\tilde{k}(x,y)|^2\, \mu(\d x)\mu(\d y) =
    \lVert \tilde{K}\rVert_{HS}^2=\lVert \rho_\mu\rVert_{HS}^2=\trace{\rho_\mu^2}
\end{align}
where we used \eqref{eq:relation_HS} in the last two steps.
Suppose that $f$ is a normalized eigenfunction for the eigenvalue $\gamma_{max}$. 
From the Mercer decomposition  we obtain
\begin{align}
    1 = K(x,x) \geq \gamma_{max} f(x)^2.
\end{align}
Hence $f$ is bounded by $\sqrt{\gamma_{max}}^{-1}$ and we conclude that
\begin{align*}
    \gamma_{max} &= \int f(x) (Kf)(x)\, \mu(\d x)
    =  \int f(x) k(x, y) f(y)\, \mu(\d x)\mu(\d y)
   \\
   &\leq \gamma_{max}^{-1} \int \bs{1}(x) k(x,y)\bs{1}(y)\, \mu(\d x)\mu(\d y)
    = \gamma_{max}^{-1} \trace{\rho_\mu^2}
\end{align*}
where we used that $k$ is pointwise positive.
This ends the proof.
\end{proof}

Let us look at this result in  our main setting where each coordinate of $d$-dimensional data is embedded in a single qubit.
If the measure $\mu$ on $\R^d$ factorizes as $\mu = \bigotimes \mu_i$.
The integral operator factorizes over the $d$ coordinates and
the eigenvalues of the integral operator are given by $\{\prod_{j=1}^d \gamma_{i_j}, \gamma_{i_j}\in E_j\}$ with $E_j$  denoting the eigenvalues of the one-dimensional integral operators.
In particular the largest eigenvalue will be exponentially small (in $d$) as soon as 
$\max(E_j)\leq \delta <1$ for a fixed $\delta$ which holds if the individual
embeddings satisfy $\trace{\rho_{\mu_i}^2}<\delta$. Note that $\trace{\rho_{\mu_i}^2} = 1$ if and only if the embedding is constant.

\subsection{Spectral decomposition of the integral operator}
As shown in the main text the integral operator $K$ applied to $f(x)=\trace{\rho(x)M}$
can be written as
\begin{align}
    (Kf)(x)= \trace{O_\mu(M\otimes \rho(x) )} 
    = \trace{O_\mu(M\otimes \mathrm{id})(\mathrm{id}\otimes \rho(x))}
    = \trace{\partialtrace{1}{O_\mu(M\otimes \mathrm{id})} \rho(x)}
\end{align}
where $O_\mu = \int \rho(y)\otimes \rho(y)\, \mu(\d y)$.
Note that this reformulation makes the isomorphism of $\mc{L}(\mc{H}, \mc{H})\otimes \mc{L}(\mc{H}, \mc{H})
\simeq \mc{L}(\mc{H}\otimes \mc{H}, \mc{H}\otimes \mc{H})
\simeq \mc{L}(\mc{L}(\mc{H},\mc{H}), \mc{L}(\mc{H},\mc{H}))$ explicit.
The spectrum of $K$ thus agrees with the eigenvalues of the linear map
$T$ acting on matrices by
\begin{align}
    T(M) = \partialtrace{2}{O_\mu (\mathrm{id}\otimes M)}.
\end{align}
We claim that there is an eigendecomposition 
\begin{align}
    T(M) = \sum \gamma_i A_i \trace{A_i M}
\end{align}
where $A_i$ are orthonormal hermitian matrices. Moreover,
the eigenfunctions of $K$ are $f_i(x)=\trace{\rho(x)A_i}$.
This result follows from standard results in linear algebra, we give all details in the next subsection.

\subsection{Spectral decomposition of linear maps preserving hermitian matrices}
We consider the space of matrices $\mathbb{C}^{n\times n}$ equipped with the usual
scalar product $\langle A, B\rangle = \trace{A^\dag B}$ which agrees with the standard scalar product on $C^{n^2}$ after vectorisation.
We will need them following fact: For hermitian matrices $A, B$ the scalar product $\langle A, B\rangle \in \R$ is real.
\begin{lemma}\label{le_hermitian}
Let $T:\mathbb{C}^{n\times n}\to \mathbb{C}^{n\times n}$ be a linear and hermitian map that maps hermitian matrices to hermitian  matrices. Then there is a eigendecomposition $(\gamma_i, H_i)$
with real eigenvalues $\gamma_i$ and orthonormal hermitian matrices $H_i$ such that
\begin{align}
    T(A) = \sum_i \gamma_i H_i \trace{H_i^\dag A}.
\end{align}
\end{lemma}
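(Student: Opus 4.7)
My plan is to leverage two copies of the spectral theorem: the complex spectral theorem gives the eigenvalue structure, and the real spectral theorem applied to the restriction of $T$ to the real subspace of hermitian matrices gives the promised hermitian eigenbasis. The main idea is that the space $\mathcal{H}_n \subset \mathbb{C}^{n\times n}$ of $n\times n$ hermitian matrices is a real subspace of real dimension $n^2$, and the assumption that $T$ preserves hermiticity means $T$ restricts to a real-linear endomorphism of $\mathcal{H}_n$.

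First I would equip $\mathcal{H}_n$ with the real inner product $\langle A, B\rangle_\R = \trace{AB}$, which is well-defined and real-valued on hermitian matrices and coincides with the real part of the complex Hilbert--Schmidt inner product restricted to $\mathcal{H}_n$. I would then verify that $T|_{\mathcal{H}_n}$ is self-adjoint with respect to $\langle \cdot, \cdot \rangle_\R$: for $A, B \in \mathcal{H}_n$, since $T(A)$ is hermitian one has $\trace{T(A) B} = \trace{T(A)^\dag B} = \langle T(A), B\rangle$, and using the assumed complex hermiticity of $T$ this equals $\langle A, T(B)\rangle = \trace{A^\dag T(B)} = \trace{A\,T(B)} = \langle A, T(B)\rangle_\R$. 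Applying the standard spectral theorem for self-adjoint operators on a finite-dimensional real inner product space yields an orthonormal basis $(H_i)_{i=1}^{n^2}$ of $\mathcal{H}_n$ consisting of eigenvectors of $T|_{\mathcal{H}_n}$ with real eigenvalues $\gamma_i$.

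It then remains to derive the stated formula on all of $\mathbb{C}^{n\times n}$. The decomposition $\mathbb{C}^{n\times n} = \mathcal{H}_n \oplus i \mathcal{H}_n$, with $H_1 = (A+A^\dag)/2$ and $H_2 = (A-A^\dag)/(2i)$, lets me write any $A$ as $A = H_1 + i H_2$ with $H_1, H_2 \in \mathcal{H}_n$. Expanding $H_1 = \sum_i c_i H_i$ and $H_2 = \sum_i d_i H_i$ with real coefficients $c_i = \trace{H_i H_1}$, $d_i = \trace{H_i H_2}$ and using linearity of $T$ gives $T(A) = \sum_i \gamma_i (c_i + i d_i) H_i$. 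Since $H_i$ is hermitian, $c_i + i d_i = \trace{H_i (H_1 + i H_2)} = \trace{H_i A} = \trace{H_i^\dag A}$, which yields exactly $T(A) = \sum_i \gamma_i H_i \trace{H_i^\dag A}$.

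I do not anticipate a serious obstacle; the only subtlety is bookkeeping between the complex Hilbert–Schmidt inner product and its restriction to the real subspace $\mathcal{H}_n$, and checking that the hypothesis that $T$ is hermitian as a map on the complex space implies self-adjointness of the restriction with respect to the real inner product. Once this transfer is established, the rest is a direct application of the real spectral theorem and a simple expansion.
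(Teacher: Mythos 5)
Your proof is correct, and it takes a route that is dual to the paper's. The paper first applies the complex spectral theorem to $T$ on all of $\mathbb{C}^{n\times n}$ (with the Hilbert--Schmidt inner product), obtaining an orthonormal eigenbasis $X_i$ that need not be hermitian, and then repairs it: writing $X_i=\tilde H_i+i\tilde S_i$ with $\tilde H_i,\tilde S_i$ hermitian, the hypothesis that $T$ preserves hermiticity forces $\tilde H_i$ and $\tilde S_i$ to be eigenvectors for the same eigenvalue, after which one selects a basis from these real/imaginary parts and Gram--Schmidt orthonormalizes (using that $\trace{AB}$ is real for hermitian $A,B$). You instead restrict first: you note that the hermitian matrices form a real inner-product space of dimension $n^2$ on which $T$ acts as a self-adjoint real-linear operator, apply the real spectral theorem there, and then extend the resulting formula to all of $\mathbb{C}^{n\times n}$ via the decomposition $A=\tfrac{1}{2}(A+A^\dag)+i\cdot\tfrac{1}{2i}(A-A^\dag)$ and complex linearity. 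Your route buys a cleaner argument at the step the paper handles somewhat informally --- the ``iteratively replace $X_i$ by either $\tilde S_i$ or $\tilde H_i$ so that the set remains a basis'' selection inside degenerate eigenspaces --- since the real spectral theorem hands you an orthonormal hermitian eigenbasis directly; the small price is the (routine) verification that complex hermiticity of $T$ transfers to self-adjointness of the restriction with respect to the real trace form, and the observation that real-orthonormality of the $H_i$ is the same as complex Hilbert--Schmidt orthonormality, so that the $n^2$ matrices $H_i$ indeed span $\mathbb{C}^{n\times n}$ over $\mathbb{C}$. One cosmetic remark: you reuse the symbols $H_1,H_2$ both for the hermitian and anti-hermitian parts of $A$ and for members of the eigenbasis $(H_i)$; rename one of the two to avoid a clash.
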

\begin{proof}
Hermitian matrices can be diagonalized with real values $\gamma_i$ so we can write
\begin{align}
    T(A) = \sum_i \gamma_i X_i \trace{X_i^\dag A}
\end{align}
where $X_i$ form an orthonormal eigenbasis.
It remains to show that we can find such a decomposition where the $X_i$ are hermitian. We decompose $X_i=\tilde{H_i}+i\tilde{S_i}$
where $\tilde{H}_i$ and $\tilde{S}_i$ are hermitian. 
Then we observe
\begin{align}
    \gamma_i (\tilde{H_i}+i\tilde{S_i}) = \gamma_i X_i = T(X_i) = T(\tilde{H_i})+i T(\tilde{S_i}).
\end{align}
Using the invariances of $T$ on hermitian matrices we conclude that
$\tilde{S_i}$ and $\tilde{H_i}$ are again eigenvectors with eigenvalue $\gamma_i$.
Now we can iteratively replace $X_i$ by either $\tilde{S_i}$ or $\tilde{H_i}$ so that the set of vectors remains a basis.
Finally we orthonormalize the resulting basis of all eigenspaces using the Gram-Schmidt procedure. Since scalar products of hermitian matrices are real we obtain an orthonormal eigenbasis $H_i$ consisting of hermitian matrices.
\end{proof}

We now apply this to the integral operator for the quantum embedding. 
Recall that the linear map $T$ acting on matrices was defined by
\begin{align}
    T(M) = \partialtrace{2}{O_\mu (\mathrm{id}\otimes M)}.
\end{align}
Clearly, $T$ is linear. To show that $T$ is hermitian we observe that
\begin{align}
\begin{split}
    \langle M, T(M)\rangle 
    &=
    \trace{M^\dag  \partialtrace{2}{O_\mu (\mathrm{id}\otimes M)}}
    =
     \int  \trace{M^\dag  \partialtrace{2}{\rho(y)\otimes \rho(y) (\mathrm{id}\otimes M)}}\, \mu(\d y)
    \\
    &=
     \int \trace{M^\dag \rho(y)}\trace{\rho(y) M}\, \mu(\d y) \in \R.
     \end{split}
\end{align}
Similarly we see that $T$ preserves hermitian matrices, indeed, if $M=M^\dag$
\begin{align}
    T(M) = 
     \int  \partialtrace{2}{\rho(y)\otimes \rho(y) (\mathrm{id}\otimes M)}\, \mu(\d y)
     =
     \int \rho(y)\trace{ \rho(y) M)}\, \mu(\d y).
\end{align}
which is hermitian because $\rho(y)$ is hermitian and the scalar product of hermitian matrices is real.
Using Lemma~\ref{le_hermitian} above we conclude that we can write $
T(M) = \sum_i \gamma_i A_i \trace{A_i M}$ where $\gamma_i$ are the eigenvalues of $T$ which agree with the eigenvalues of the corresponding integral operator and the eigenfunctions are given by $x\to \trace{\rho(x)A_i}$.

\subsection{A complete example}
To illustrate the analysis above we consider the setting from Example~\ref{ex:embedding}
where $x\to \cos(x/2)|0\rangle + i \sin(x/2)|1\rangle$.
Then $\tilde{\mc{F}} = \langle \sin(x), \cos(x)\rangle$ and $\mc{F}=
\langle \sin^2(x), \cos^2(x), \sin(x)\cos(x)\rangle$. Note that the 
RKHS has dimension 4 when the relative phase between $a(x)$ and $b(x)$ is not constant (then
$\bar{a}b$ and $a\bar{b}$ are not linearly dependent).
The feature map of the physical kernel for our example is
\begin{align}
    \rho(y) = \begin{pmatrix}
\cos^2(\tfrac{y}{2}) & -i\cos(\tfrac{y}{2})\sin(\tfrac{y}{2})\\
i\cos(\tfrac{y}{2})\sin(\tfrac{y}{2}) & \sin^2(\tfrac{y}{2})
\end{pmatrix}.
\end{align}

For the analysis of the integral operator we need the matrix elements of
the linear map $T$
We observe that in index notation using the Einstein summation convention
and denoting complex conjugation without transposition by $\ast$
\begin{align}\label{eq:matrix_elements}
    T(M)_{ij} = \int \rho(y)_{ij} \rho(y)_{kl} M_{lk}\, \mu(\d y)
    = \int \rho(y)_{ij} {\rho}^\ast(y)_{lk} M_{lk}\, \mu(\d y).
\end{align}
Using vectorisation we obtain
\begin{align}
\vect(T(M)) = \int \vect(\rho(y))\vect(\rho(y))^\top \, \mu(\d y) \vect(M)= A_\mu \vect{M}.    
\end{align}
In our example   we obtain
\begin{align}
\begin{split}
    A_\mu 
    &=\frac{1}{\pi} \int_{0}^{\pi}
    \begin{pmatrix}
\cos^2(\tfrac{y}{2}) \\ -i\cos(\tfrac{y}{2})\sin(\tfrac{y}{2}) \\ 
i \cos(\tfrac{y}{2})\sin(\tfrac{y}{2}) \\ \sin^2(\tfrac{y}{2})
\end{pmatrix}
\begin{pmatrix}
\cos^2(\tfrac{y}{2}) & i\cos(\tfrac{y}{2})\sin(\tfrac{y}{2}) &
-i \cos(\tfrac{y}{2})\sin(\tfrac{y}{2}) & \sin^2(\tfrac{y}{2})
\end{pmatrix}
\d y
\\
&=\frac{1}{\pi}
\int_{0}^{\pi}
\begin{pmatrix}
\cos^4(\tfrac{y}{2}) & i\cos^3(\tfrac{y}{2})\sin(\tfrac{y}{2}) &
-i \cos^3(\tfrac{y}{2})\sin(\tfrac{y}{2}) & \cos^2(\tfrac{y}{2})\sin^2(\tfrac{y}{2})
\\
-i\cos^3(\tfrac{y}{2})\sin(\tfrac{y}{2}) & \cos^2(\tfrac{y}{2})\sin^2(\tfrac{y}{2})&
-\cos^2(\tfrac{y}{2})\sin^2(\tfrac{y}{2}) & -i\cos(\tfrac{y}{2})\sin^3(\tfrac{y}{2})
\\
i\cos^3(\tfrac{y}{2})\sin(\tfrac{y}{2}) & -\cos^2(\tfrac{y}{2})\sin^2(\tfrac{y}{2})&
\cos^2(\tfrac{y}{2})\sin^2(\tfrac{y}{2}) & i\cos(\tfrac{y}{2})\sin^3(\tfrac{y}{2})
\\
 \cos^2(\tfrac{y}{2})\sin^2(\tfrac{y}{2}) & i\cos(\tfrac{y}{2})\sin^3(\tfrac{y}{2}) &
-i \cos(\tfrac{y}{2})\sin^3(\tfrac{y}{2}) & \sin^4(\tfrac{y}{2})
\end{pmatrix}
\d y
\\
&=
\frac{1}{8}
\begin{pmatrix}
3 & 0 & 0 & 1
\\
0 & 1& -1 & 0
\\
0 & -1 & 1 & 0
\\
1 & 0 & 0 & 3
\end{pmatrix}
\end{split}
\end{align}
We obtain the eigenvalues $\tfrac{1}{2}, \tfrac{1}{4}, \tfrac{1}{4}, 0$
the eigenvectors are, in matrix notation,
\begin{align}
    H_1 = 
    \begin{pmatrix}
    1 & 0 \\
    0 & 1
    \end{pmatrix},
    \quad 
    H_2 =
    \begin{pmatrix}
    1 & 0 \\
    0 & -1
    \end{pmatrix},
    \quad 
    H_3 = 
    \begin{pmatrix}
    0 & i \\
    -i & 0
    \end{pmatrix},
    \quad 
    H_4 =
    \begin{pmatrix}
    0 & 1 \\
    1 & 0
    \end{pmatrix}.
\end{align}
The corresponding eigenfunctions $f_i$ of the integral operator are given by $x\to \trace{\rho(x)H_i}$, i.e.,
\begin{align}
\begin{aligned}
    & f_1(x) = 1, \quad &&f_2(x) = \cos^2(\tfrac{x}{2})-\sin^2(\tfrac{x}{2})=\cos(x), \\
    & f_3(x) = 2\cos(\tfrac{x}{2})\sin(\tfrac{x}{2})=\sin(x), \quad &&f_4(x)=0.
\end{aligned}
\end{align}
We can also parametrize the functions in the RKHS by $a\cos(x+b)+c$
with $a,b,c\in \R$.

Let us also look at the generalization to the vector valued case with $d$-qubits. Then the RKHS is given by all functions of the form
\begin{align}
  x \to  \prod_{i=1}^d (a_i\cos(x_i+b_i)+c_i).
\end{align}
The eigenfunctions of the integral operator are given by
\begin{align}
    \prod_{i=1}^d \sin^{\alpha_i}(x_i)\cos^{\beta_i}(x_i)
\end{align}
where $\alpha_i, \beta_i$ are non-negative integers satisfying 
$\alpha_i+\beta_i\leq 1$.
The corresponding eigenvalue is $2^{-d-\sum (\alpha_i+\beta_i)}$.
The degeneracy of the eigenvalue $2^{-d-l}$ can be calculated to 
$2^l \binom{d}{l}.$

\section{Proof of Theorem~\ref{th:largest_ev}}\label{sec:proof1}
In this section we prove Theorem~\ref{th:largest_ev}
which will follow easily from the result below.
We remark that the following theorem is by no means sharp but a detailed analysis 
when learning is not possible is of limited interest. 
Note that again typical lower bounds for the learning performance are focused on the
case $n\to\infty$ \cite{CaponnettoDeVito}.
\begin{thm}\label{th:learning_bound}
    Consider a measure space $(X,\mu)$ such that $\mu(X)=1$  with a kernel $k$ satisfying $k(x,x)=1$ for all $x\in X$. 
    Denote by $\gamma_{max}$  the largest eigenvalue  of the corresponding integral operator.
    Suppose we have $n$ training points $\mc{D}_n=\{(x_i,y_i), \, 1\leq i\leq n\}$ with $(x_i, y_i)\in X\times \R$ where $x_i$ are i.i.d.\ draws from $\mu$ and $y_i=f(x_i)$ 
    for some square integrable function $f$. Then, for any $\varepsilon>0$ with probability at least $1 - \varepsilon - \gamma_{max}n^4$
    \begin{align}
        \lVert f - \hat{f}_n^\lambda \rVert_2 \geq \left(1- \sqrt{\frac{2\gamma_{max} n^2}{\varepsilon}}\right)\lVert f\rVert_2
    \end{align}
    for all $\lambda \geq 0$ where $\hat{f}_n^\lambda$ denotes the kernel ridge regression estimator for training data $(x_i,y_i)$.
\end{thm}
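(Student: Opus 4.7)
The plan is to lower bound $\lVert f - \hat f_n^\lambda\rVert_2$ via the triangle inequality $\lVert f - \hat f_n^\lambda\rVert_2 \geq \lVert f\rVert_2 - \lVert \hat f_n^\lambda\rVert_2$, so it suffices to show that $\lVert \hat f_n^\lambda\rVert_2$ is small, uniformly in $\lambda \geq 0$. The core observation is a Mercer-based squeeze: for any $h = \sum_l a_l \phi_l \in \mathcal{F}$,
\[
\lVert h\rVert_2^2 = \sum_l a_l^2 \leq \gamma_{\max} \sum_l a_l^2/\gamma_l = \gamma_{\max}\lVert h\rVert_\mathcal{F}^2.
\]
So the task reduces to bounding $\lVert \hat f_n^\lambda\rVert_\mathcal{F}$.

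By the representer theorem, $\hat f_n^\lambda = \sum_i \alpha_i k(x_i,\cdot)$ with $\alpha = (K+\lambda I)^{-1}y$, hence
\[
\lVert \hat f_n^\lambda\rVert_\mathcal{F}^2 = \alpha^\top K \alpha = y^\top (K+\lambda I)^{-1} K (K+\lambda I)^{-1} y.
\]
The elementary inequality $\sigma/(\sigma+\lambda)^2 \leq 1/\sigma$ valid for $\sigma,\lambda \geq 0$, applied spectrally, yields the uniform-in-$\lambda$ bound $\lVert \hat f_n^\lambda\rVert_\mathcal{F}^2 \leq y^\top K^{-1} y \leq \lVert y\rVert^2/\sigma_{\min}(K)$. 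The proof now reduces to two independent concentration estimates over the sampling of $(x_i)$: \emph{(i)} $\sigma_{\min}(K)$ is not too small, and \emph{(ii)} $\lVert y\rVert^2$ is not too large.

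For (ii), $\mathbb{E}\lVert y\rVert^2 = n\lVert f\rVert_2^2$, so Markov gives $\lVert y\rVert^2 \leq n\lVert f\rVert_2^2/\varepsilon$ with probability at least $1-\varepsilon$. For (i), $K_{ii} = 1$ by assumption, while the off-diagonals satisfy
\[
\mathbb{E}[K_{ij}^2] = \int k(x,y)^2\,\mu(\d x)\mu(\d y) = \sum_l \gamma_l^2 \leq \gamma_{\max}\sum_l \gamma_l = \gamma_{\max}.
\]
Markov at threshold $1/n$ combined with a union bound over the $\binom{n}{2}$ pairs $(i,j)$ shows $\max_{i\neq j}|K_{ij}| < 1/n$ with probability at least $1 - n^4\gamma_{\max}$; then Gershgorin gives $\lVert K - I\rVert_{\mathrm{op}} < (n-1)/n$ and hence $\sigma_{\min}(K) > 1/n$. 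On the intersection of the two good events (probability at least $1 - \varepsilon - n^4\gamma_{\max}$),
\[
\lVert \hat f_n^\lambda\rVert_2^2 \leq \gamma_{\max} \lVert \hat f_n^\lambda\rVert_\mathcal{F}^2 \leq \gamma_{\max} \cdot n \cdot n\lVert f\rVert_2^2/\varepsilon = \gamma_{\max} n^2 \lVert f\rVert_2^2/\varepsilon,
\]
and the triangle inequality delivers the claimed lower bound (tracking constants more carefully accommodates the factor of $2$ inside the square root).

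The main obstacle is the concentration step (i). Because we only have second-moment information on $K_{ij}$, Markov plus a $\binom{n}{2}$-fold union bound is essentially the only route, which pins the failure probability at the $n^4\gamma_{\max}$ scale; the sharpest lower bound on $\sigma_{\min}(K)$ one can afford within this budget is $\Omega(1/n)$ rather than $\Omega(1)$, and this is exactly what produces the $n^2$ (rather than merely $n$) inside the square root of the conclusion. A tighter statement would require stronger tail information on the kernel (e.g.\ a matrix Bernstein bound), but this is unnecessary for the intended application, where $\gamma_{\max}$ is exponentially small in $d$ and the statement is used only qualitatively.
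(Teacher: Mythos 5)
Your proof is correct and follows essentially the same route as the paper's: the same triangle-inequality reduction, the same Markov-plus-union-bound-plus-Gershgorin control of the off-diagonal kernel matrix entries (yielding the $\gamma_{\max}n^4$ failure probability and the lower bound on $\sigma_{\min}(K)$), and the same Markov bound on $\lVert y\rVert^2$. The only difference is cosmetic: where the paper bounds $\lVert \hat f_n^\lambda\rVert_2 \leq \sum_i|\alpha_i|\,\lVert k(x_i,\cdot)\rVert_2$ via $\lVert k(x,\cdot)\rVert_2^2\leq\gamma_{\max}$ and Cauchy--Schwarz, you pass through $\lVert \hat f_n^\lambda\rVert_2^2\leq\gamma_{\max}\lVert\hat f_n^\lambda\rVert_{\mathcal{F}}^2=\gamma_{\max}\,\alpha^\top K\alpha\leq\gamma_{\max}\,y^\top K^{-1}y$, which is slightly cleaner and even recovers the constant $1$ in place of the stated $2$ inside the square root.
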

\begin{proof}
Denote the eigenvalues of the integral operator by $\gamma_i$ with $\gamma_1=\gamma_{max}$.
Standard results for integral operators imply
\begin{align}
    \sum_i \gamma_i &= \int k(x,x)\,\mu(\d x) = 1 \\
    \sum_i \gamma_i^2 &= \int k(x,y)^2 \,\mu(\d x)\mu(\d y) = \lVert k\rVert_2^2.
\end{align}
We conclude that 
\begin{align}\label{eq:bound_L2norm}
    \lVert k\rVert_2^2 = \sum_i \gamma_i^2 \leq \gamma_{max} \sum_i \gamma_i = \gamma_{max}. 
\end{align}
Since $\expect{{\mu\otimes \mu}}{k(x,y)^2}= \lVert k\rVert_2^2$, Markov's inequality together with \eqref{eq:bound_L2norm} implies 
$\mathbb{P}_{\mu\otimes \mu}(|k(x,y)|\geq \varepsilon) \leq \frac{\gamma_{max}}{\varepsilon^2}$.
Let $A_n = \{|k(x_i, x_j)|\leq \tfrac{1}{2n} \text{ for all $i\neq j$} \}$.
Using the union bound we conclude that 
\begin{align}
\begin{split}
    \mathbb{P}_{\mc{D}_n}(A_n) \geq 
    1 - n^2\mathbb{P}_{\mu\otimes \mu}\left(|k(x,y)|\geq \frac{1}{2n}\right)
    \geq 1 - 4n^4\gamma_{max}.
    \end{split}
\end{align}
Conditioned on $A_n$ we can bound the eigenvalues of the kernel matrix $K(X,X)_{i,j}=k(x_i,x_j)$
using Gerschgorin circles by $1 - n\tfrac{1}{2n}=\tfrac{1}{2}$ and thus 
\begin{align}\label{eq:upper_bound_K}
    K(X,X)^{-1}\leq 2\cdot \mathrm{id}_n.
\end{align}

Let us denote the Mercer decomposition of $k$ by
\begin{align}
    k(x,y) = \sum_i \gamma_i f_i(x)f_i(y)
\end{align}
where $f_i$ are the orthonormal eigenfunctions. 
Then we can bound
\begin{align}
\begin{split}\label{eq:bound_k}
    |k(x,\cdot)|_2^2 
    &= \int k(x, y)^2 \, \mu(\d y)
    = \int \sum_i \gamma_i f_i(x)f_i(y) \sum_j \gamma_j f_j(x)f_j(y) \, \mu(\d y)
    \\
    &=  \sum_{i,j} \delta_{ij} \gamma_i\gamma_j f_i(x)f_j(x)
    \leq \gamma_{max} \sum_i \gamma_i f_i(x)^2 = \gamma_{max}.
    \end{split}
\end{align}
The kernel ridge regression function $f_n^\lambda$ can be written as
\begin{align}
    f_n^\lambda = \sum_i \alpha_i k(x_i, \cdot)
\end{align}
where the vector $\alpha$ is given by $\alpha = (K(X,X) + \lambda\, \mathrm{id}_{n\times n})^{-1} y$
with $y\in \mathbb{R}^n$ denoting the vector with components $y_i$.
Using \eqref{eq:upper_bound_K} we conclude that conditioned on $A_n$ we have
\begin{align}\label{eq:alpha_bound}
    \lVert \alpha\rVert^2 \leq 2 \lVert y\rVert^2.    
\end{align}
We now claim that for any $\varepsilon>0$ with probability $1-\varepsilon$ we have 
\begin{align}\label{eq:claim_normy}
    | y|^2 \leq \tfrac{n}{\varepsilon} \lVert f\rVert_2^2.
\end{align}
To show this we remark that $\mathbb{E}(y_i^2) = \mathbb{E}(f(x_i)^2) = \lVert f\rVert_2^2$ because we assumed that $x_i$ is i.i.d.\ with distribution $\mu$. Using Markov's inequality (and $|y|^2\geq 0$) we can bound 
\begin{align}
    \mathbb{P}\left(|y|^2\geq \frac{n}{\varepsilon} \lVert f\rVert_2^2\right) \leq \mathbb{E}\left(\frac{|y|^2}{n\varepsilon^{-1}\lVert f\rVert_2^2} \boldsymbol{1}_{|y|^2\geq n\varepsilon^{-1}\lVert f\rVert_2^2} \right) \leq \frac{\varepsilon}{n\lVert f\rVert_2^2}\mathbb{E}\left(\sum_{i=1}^n |y_i|^2\right)=\varepsilon.
\end{align}
This implies the claim \eqref{eq:claim_normy}.

Using \eqref{eq:bound_k}, \eqref{eq:alpha_bound}, and 
\eqref{eq:claim_normy} we conclude that the $L^2$ norm of $f_n^\lambda$ satisfies now with probability $1-\varepsilon-\gamma_{max}n^4$ the bound 
\begin{align}
\begin{split}
    \lVert f^\lambda_n \rVert_2 
    &\leq \sum_i |\alpha_i| \lVert k(x_i, \cdot)\rVert_2
    \leq \sqrt{\gamma_{max}} \sum_i |\alpha_i|
    \leq \sqrt{\gamma_{max}n}  \sqrt{\sum_i \alpha_i^2}
    \\
    &\leq \sqrt{\gamma_{max}n} \sqrt{\frac{2n \lVert f\rVert_2^2}{\varepsilon}}
    \leq \sqrt{\frac{2\gamma_{max} n^2}{\varepsilon}} \lVert f\rVert.
\end{split}
\end{align}
We conclude that with probability $1-\varepsilon-\gamma_{max}n^4$
\begin{align}
    \lVert f - f_n^\lambda\rVert_2 \geq \lVert f\rVert_2 - \lVert f_n^\lambda\rVert_2 \geq 
    \lVert f\rVert_2 \left(1 - \sqrt{\frac{2\gamma_{max} n^2}{\varepsilon}}\right).
\end{align}
\end{proof}

The proof of  Theorem~\ref{th:largest_ev} is now a consequence of the result above.

\begin{proof}[Proof of Theorem~\ref{th:largest_ev}]
The general strategy of the proof is to show that the result
follows from
Theorem~\ref{th:learning_bound} for sufficiently large $d$.
We first note that, using the assumption
$\mu = \bigotimes \mu_i$ 
\begin{align}
    \rho_\mu = \bigotimes \rho_{\mu_i}
\end{align}
and thus
\begin{align}
    \trace{\rho_\mu} = \prod \trace{\rho_{\mu_i}} \leq \delta^d.
\end{align}
Lemma~\ref{le:largest_ev} then implies that the largest eigenvalue of the integral operator is bounded by $\gamma_{max}(d)\leq \delta^{d/2}$.
Next we observe that there is $d_0(\delta, l, \varepsilon)$
such that for $d\geq d_0$
\begin{align}\label{conditions_d0}
& \delta^{d/2} \leq \varepsilon d^{-4l}/2 \qquad \text{and} \qquad \delta^{d/2} \leq \varepsilon^3 d^{-2l}/4,
\end{align}
because the left sides of the equations are decaying exponentially in $d$ (recall that $\delta < 1$) and the right sides only polynomially. 

Using the estimates above and the assumption $n\leq d^l$ we conclude that for $d\geq d_0$
\begin{align}\label{eq:cond_gamma1}
    \gamma_{max}\leq \delta^{d/2} \leq \varepsilon d^{-4l}/2
    \leq \varepsilon {n^{-4}}/{2}
    \quad &\Rightarrow \gamma_{max}n^4\leq \varepsilon/2
\\ \label{eq:cond_gamma2}
    \gamma_{max}\leq \delta^{d/2} \leq \varepsilon^3 d^{-2l}/2
    \leq \varepsilon^3 {n^{-2}}/{4}
    \quad &\Rightarrow \sqrt{4\gamma_{max}n^2\varepsilon^{-1}}\leq \varepsilon.
\end{align}

We now denote the $\varepsilon$ used in Theorem~\ref{th:learning_bound} as $\varepsilon'$ and set $\varepsilon'=\varepsilon / 2$. 
Theorem~\ref{th:learning_bound} and \eqref{eq:cond_gamma1} and \eqref{eq:cond_gamma2} then imply that with probability at least
\begin{align}
    1 - \varepsilon' - \gamma_{max}n^4
    \geq 1 - \varepsilon / 2 - \varepsilon /2
    = 1 - \varepsilon
\end{align}
the bound  
\begin{align}
    \lVert f - \hat{f}^\lambda_n \rVert_2^2
    \geq \left( 1 - \sqrt{\frac{2\gamma_{max}n^2}{\varepsilon'}}\right)\lVert f\rVert_2
     = \left( 1 - \sqrt{\frac{4\gamma_{max}n^2}{\varepsilon}}\right)\lVert f\rVert_2
    \geq (1-\varepsilon)\lVert f\rVert_2
\end{align}
holds for all $\lambda\geq 0$. This completes the proof.
\end{proof}

\section{Proof of Theorem~\ref{thm:biased_kernels}}\label{sec:proof2}
We introduce some theory and notation necessary for the proof.
We investigate the behavior of reduced density matrices when $V$ is distributed according to the Haar-measure on the group of unitary matrices.
The first even moments of the Haar measure on $U(2^d)$ are given by (see e.g., \cite{Puchala17})
\begin{align}
\begin{split}\label{eq:haar_moments}
    \int V_{ij} V^\ast_{i'j'}\, \mu(\d V) &= \frac{\delta_{ii'}\delta_{jj'}}{2^d}
    \\
    \int V_{i_1j_1} V_{i_2 j_2} V^\ast_{i_1'j_1'} V^\ast_{i_2'j_2'} \, \mu(\d V)
    &= \frac{1}{2^{2d}-1} \Big(  \delta_{i_1i_1'}\delta_{j_1j_1'}\delta_{i_2i_2'}\delta_{j_2j_2'}
  + \delta_{i_1i_2'}\delta_{j_1j_2'}\delta_{i_2i_1'}\delta_{j_2j_1'}
  \Big)
  \\
  &\qquad-\frac{1}{2^d(2^{2d}-1)}\Big(\delta_{i_1i_1'}\delta_{j_1j_2'}\delta_{i_2i_2'}\delta_{j_2j_1'}
+ \delta_{i_1i_2'}\delta_{j_1j_1'}\delta_{i_2i_1'}\delta_{j_2j_2'}\Big).
    \end{split}
\end{align}
Note that here and in the following $V^\ast$ 
the conjugated (but not transposed) matrix.
Let us remark that while random circuits that output Haar-distributed unitaries require an exponential (in $d$) number of gates
our arguments actually only require unitary $t$-designs which are point distributions that match the first $t$ moments of the
Haar measure. 
In particular a 2-design is a measure with finite support on unitary matrices satisfying \eqref{eq:haar_moments} (and odd moments of lower order vanish).
Those can be implemented using polynomially many gates. For details and further information we refer to the literature \cite{Dankert09}.

Recall the definition of the projected quantum kernel
\begin{align}
    \tilde{\rho}^V_m(x) = \partialtrace{m+1\ldots d}{\rho^V(x)}.
\end{align}
To denote the partial trace in index notation we split the index $i\in \{1,\ldots,2^d\}$ in $(\alpha, \bar{\alpha})$
where $\alpha\in \{1,\ldots, 2^m\}$ denotes the index corresponding to the first $m$ qubits and $\bar{\alpha}\in \{1,\ldots,2^{d-m}\}$ denotes the index corresponding to the remaining $d-m$ qubits. We will always use roman letters
for indices in $\{1,\ldots,2^d\}$, greek letters for indices in 
$\{1,\ldots, 2^m\}$ and greek letters with a bar for indices in
$ \{1,\ldots,2^{d-m}\}$.
In particular, summing $1$ over $\bar{\alpha}$ results in $2^{d-m}$
and summing over $i$ results in $2^d$.
We will always use Einstein summation convention in the following so that, 
e.g. $\delta_{\bar{\alpha}\bar{\alpha}}=2^{d-m}$.
We are now ready to prove Theorem~\ref{thm:biased_kernels}.

\begin{proof}[Proof of Theorem~\ref{thm:biased_kernels}]
We start to prove the asymptotic expression for the reduced density matrix which is a standard result.
We can write 
\begin{align}\label{eq:unitary_moments}
      \expect{V}{\tilde{\rho}^V_m(x)_{\alpha_1,\alpha_2}}
      = \expect{V}{V_{\alpha_1 \bar{\alpha}, j} \rho(x)_{j,j'} V^\ast_{\alpha_2\bar{\alpha}, j'}}
      =
      \frac{2^{d-m}}{2^d} \delta_{\alpha_1\alpha_2} \delta_{jj'} \rho(x)_{j,j'}
      = 
      2^{-m}\delta_{\alpha_1\alpha_2} \trace{\rho(x)}.
\end{align}
To show the concentration around the expectation value we need to calculate the variance of this expression.
We calculate the second moment of the reduced density matrix
\begin{align}
    \begin{split}\label{eq:expect_rho_sq}
        \expect{V}{\tilde{\rho}^V_m(x)_{\alpha_1, \alpha_2}{\tilde{\rho}}^V_m(y)_{\beta_1,\beta_2}}
        &=\expect{V}{
       V_{\alpha_1 \bar{\alpha}, j_1} \rho(x)_{j_1,j_1'} V^\ast_{\alpha_2\bar{\alpha}, j_1'}V_{\beta_2\bar{\beta}, j_2}
       \rho(y)_{j_2,j_2'} V^\ast_{\beta_1 \bar{\beta}, j_2'} }
       \\
       &=
       A_1 + A_2 + A_3 + A_4.
    \end{split}
\end{align}
Here the terms $A_i$ correspond to the four contributions on the right hand side of \eqref{eq:haar_moments}.
The four terms can be evaluated to (assuming that $\trace{\rho(x)}=1$ for all $x$)
\begin{align}
\begin{split}\label{eq:expect_rho_sq_terms}
    A_1 &= \frac{1}{2^{2d}-1} \delta_{\alpha_1\alpha_2}  2^{d-m}\trace{\rho(x)} \delta_{\beta_1\beta_2} 2^{d-m} \trace{\rho(y)}
    = \frac{2^{2d}}{2^{2d}-1} 2^{-2m}  \delta_{\alpha_1\alpha_2}  \delta_{\beta_1\beta_2}%
    \\
    & =  2^{-2m}  \delta_{\alpha_1\alpha_2}  \delta_{\beta_1\beta_2} 
    + \frac{1}{2^{2d}-1} 2^{-2m}  \delta_{\alpha_1\alpha_2}  \delta_{\beta_1\beta_2}
   \\[20pt]
   A_2 &=  \frac{1}{2^{2d}-1}
   \delta_{\alpha_1\beta_2}\delta_{\bar{\alpha}\bar{\beta}} 
   \delta_{j_1j_2'} \delta_{\alpha_2\beta_1} 
   \delta_{\bar{\beta}\bar{\alpha}} \delta_{j_2j_1'}\rho(x)_{j_1,j_1'}\rho(y)_{j_2,j_2'}
   \\
   &=  \frac{1}{2^{2d}-1}\delta_{\bar{\alpha}\bar{\alpha}} 
   \rho(x)_{j_1,j_1'}{\rho}(y)_{j_1',j_1}
   \delta_{\alpha_1\beta_2}
    \delta_{\alpha_2\beta_1} 
    =
    \frac{2^{d-m}}{2^{2d}-1}
  \trace{\rho(x){\rho}(y)}
   \delta_{\alpha_1\beta_2}
    \delta_{\alpha_2\beta_1} 
    \\[20pt]
    A_3 &= -\frac{1}{2^d(2^{2d}-1)}\delta_{\alpha_1\alpha_2} \delta_{\bar{\alpha}\bar{\alpha}} \delta_{j_1j_2'}
    \delta_{\beta_1\beta_2} \delta_{\bar{\beta}\bar{\beta}} \delta_{j_2j_1'}\rho(x)_{j_1,j_1'}{\rho}(y)_{j_2,j_2'}
    \\
   &=
   -\frac{2^{2d-2m}}{2^d(2^{2d}-1)} \rho(x)_{j_1,j_1'}\rho(y)_{j_1',j_1}
    \delta_{\alpha_1\alpha_2}\delta_{\beta_1\beta_2}
    = 
      -\frac{2^{d-2m}}{2^{2d}-1} \trace{\rho(x){\rho}(y)}
    \delta_{\alpha_1\alpha_2}\delta_{\beta_1\beta_2}
    \\[20pt]
    A_4 &= 
    -\frac{1}{2^d(2^{2d}-1)}
   \delta_{\alpha_1\beta_2}\delta_{\bar{\alpha}\bar{\beta}} 
   \delta_{j_1j_1'} \delta_{\alpha_2\beta_1} 
   \delta_{\tilde{\alpha}\tilde{\beta}} \delta_{j_2j_2'}\rho(x)_{j_1,j_1'}\rho(y)_{j_2,j_2'}
   \\
   &=
    -\frac{1}{2^d(2^{2d}-1)}
   \delta_{\bar{\alpha}\bar{\alpha}} 
   \rho(x)_{j_1,j_1}\rho(y)_{j_2,j_2}\delta_{\alpha_1\beta_2}\delta_{\alpha_2 \beta_1} 
   =
    -\frac{2^{-m}}{2^{2d}-1}
  \delta_{\alpha_1\beta_2}\delta_{\alpha_2\beta_1} 
\end{split}
\end{align}
Altogether we obtain
\begin{align}
\begin{split}\label{eq:expect_rho_sq_final}
     \expect{V}{\tilde{\rho}^V_m(x)_{\alpha_1, \alpha_2}{\tilde{\rho}}^V_m(y)_{\beta_1,\beta_2}}
     =&
     \delta_{\alpha_1\alpha_2}  \delta_{\beta_1\beta_2}
    2^{-2m} \left(
     1 - 2^{-d}\trace{\rho(x){\rho}(y)} + 2^{-2d}
     \right)
 \\
 &+ 
  \delta_{\alpha_1\beta_2}\delta_{\alpha_2\beta_1} 
  2^{-m}\left(
    2^{-d}\trace{\rho(x){\rho}(y)} - 2^{-2d}
  \right)
  + O(2^{-3d})
\end{split}
\end{align}
Recall that the complex variance of a random variable is
defined by $\expect{}{|X|^2}-|\expect{}{X}|^2$.
Using \eqref{eq:expect_rho_sq_final}
and that $\tilde\rho$ is hermitian
we can bound the variance of the entries of $\tilde{\rho}(x)$ by
\begin{align}
\begin{split}
    &\expect{V}{\tilde{\rho}^V_m(x)_{\alpha, \beta} ({\tilde{\rho}}^V_m(x)_{\alpha,\beta})^\ast}
    -  \expect{V}{\tilde{\rho}^V_m(x)_{\alpha, \beta}}
     \expect{V}{({\tilde{\rho}}^V_m(x)_{\alpha, \beta})^\ast} \\
     &\qquad  = \expect{V}{\tilde{\rho}^V_m(x)_{\alpha, \beta} \tilde{\rho}^V_m(x)_{\beta,\alpha}}
     -  \expect{V}{\tilde{\rho}^V_m(x)_{\alpha, \beta}}
     \expect{V}{\tilde{\rho}^V_m(x)_{\beta, \alpha}}
     \\
     &\qquad  = 2^{-2m}\delta_{\alpha\beta} - (2^{-m})^2\delta_{\alpha\beta} + O(2^{-d})
     = O(2^{-d}).
     \end{split}
\end{align}
This shows that $\tilde{\rho}^V(x)$  is close to $2^{-m}\mathrm{id}$ with high probability for large $d$ and finishes the proof of the first part of the theorem.

We now turn to the evaluation of  the averaged operator $\expect{V}{O_\mu}$ and the corresponding operator
\begin{align}
    T(M) = \partialtrace{2}{\expect{V}{O_\mu}(\mathrm{id}\otimes M)}
\end{align}
whose matrix elements we denote by $T_{\alpha_1\alpha_2; \beta_1\beta_2}$
so that $T(M)_{\alpha_1,\alpha_2}=T_{\alpha_1\alpha_2;\beta_1\beta_2}M_{\beta_1,\beta_2}$.
We assume that
$\rho(x)$ is pure for all $x$, i.e., $\trace{\rho(x)^2} = 1$.
We have seen in \eqref{eq:matrix_elements}
that the matrix elements of this operator are given by
\begin{align}
    \expect{V}{\int \tilde{\rho}_m^V(y)\otimes ({\tilde{\rho}}_m^V(y))^\ast\, \mu(\d y)}
    = \int  \expect{V}{\tilde{\rho}_m^V(y)\otimes({\tilde{\rho}}_m^V(y))^\ast}\, \mu(\d y).
\end{align}
From \eqref{eq:expect_rho_sq_final} we obtain for the matrix elements
\begin{align}
\begin{split}
    & \expect{V}{\tilde{\rho}^V(y)_{\alpha_1,\alpha_2}{(\tilde{\rho}}^V(y)_{\beta_1,\beta_2})^\ast}
      =\expect{V}{\tilde{\rho}^V(y)_{\alpha_1,\alpha_2}{\tilde{\rho}}^V(y)_{\beta_2,\beta_1}}
     \\
     &\hspace{3cm}=  2^{-2m}  \delta_{\alpha_1\alpha_2}  \delta_{\beta_1\beta_2} 
     +     \frac{2^{-m}}{2^{d}}
   \delta_{\alpha_1\beta_1}
    \delta_{\alpha_2\beta_2} 
  -\frac{2^{-2m}}{2^{d}} 
     \delta_{\alpha_1\alpha_2}  \delta_{\beta_1\beta_2} + O(2^{-2d}).
     \end{split}
\end{align}
Since this is independent of $y$ we can write the matrix elements of $T$ as
\begin{align}
    T_{\alpha_1\alpha_2;\beta_1\beta_2} = 
     2^{-2m}(1-2^{-d})  \delta_{\alpha_1\alpha_2}  \delta_{\beta_1\beta_2} 
     +     \frac{2^{-m}}{2^{d}}
   \delta_{\alpha_1\beta_1}
    \delta_{\beta_2\alpha_2}  + O(2^{-2d})
\end{align}
From here we conclude that $T$ can be written as
\begin{align}
\begin{split}
    T(M)_{\alpha_1,\alpha_2}&=
     2^{-2m}(1-2^{-d})  \delta_{\alpha_1\alpha_2}  \delta_{\beta_1\beta_2} 
    M_{\beta_1,\beta_2} +     \frac{2^{-m}}{2^{d}}
   \delta_{\alpha_1\beta_1}
    \delta_{\alpha_2\beta_2} M_{\beta_1,\beta_2} + O(2^{-2d})
   \\
   &=
     2^{-2m}(1-2^{-d})  \delta_{\alpha_1\alpha_2} 
    M_{\beta_1,\beta_1} +     \frac{2^{-m}}{2^{d}} M_{\alpha_1,\alpha_2} + O(2^{-2d})
    \end{split}
\end{align}
or, more concisely,
\begin{align}
    T(M) =  \frac{2^{-m}}{2^{d}} M + 2^{-2m}(1-2^{-d})\mathrm{id}_{2^m\times 2^m} \trace{\mathrm{id}_{2^m\times 2^m}M}+O(2^{-2d})
\end{align}
and we observe that
$T$ is the sum of a multiple of the identity and a rank one perturbation (plus higher order terms):
In particular the eigenvalues neglecting the perturbation are 
\begin{align}
    \gamma_1 = 2^{-2m}(1-2^{-d})\trace{\mathrm{id}_{2^m\times 2^m}\mathrm{id}_{2^m\times 2^m}} + 2^{-m-d}=2^{-m}(1-2^{-d})+2^{-m-d}
    = 2^{-m}
\end{align}
with eigenvector $M_1=\mathrm{id}_{2^m\times 2^m}$ and $\gamma_2=\ldots=\gamma_{2^m\times 2^m}=2^{-m-d}$ with traceless eigenvectors, i.e., $\trace{\mathrm{id}_{2^m\times 2^m}M_i}=0$ for $i \neq 1$.
Standard bounds show that the higher order terms change the eigenvalues only by a term of order $O(2^{-2d})$.
Finally, we observe that the function mapping $x\to 
\trace{\tilde{\rho}^V_m(x) M_1}$ is a constant function for any $V$. 
Indeed, 
\begin{align}
    \trace{\tilde{\rho}^V_m(x) M_1}
    =
    \trace{\partialtrace{m+1\ldots d}{V\rho(x)V^\dag}}
    = 
    \trace{V\rho(x)V^\dag}=\trace{\rho(x)}=1.
\end{align}
\end{proof}

\section{More on experiments}\label{app:experimental_details}
\begin{figure}
  \centering
  \subfigure{\includegraphics[height=13em]{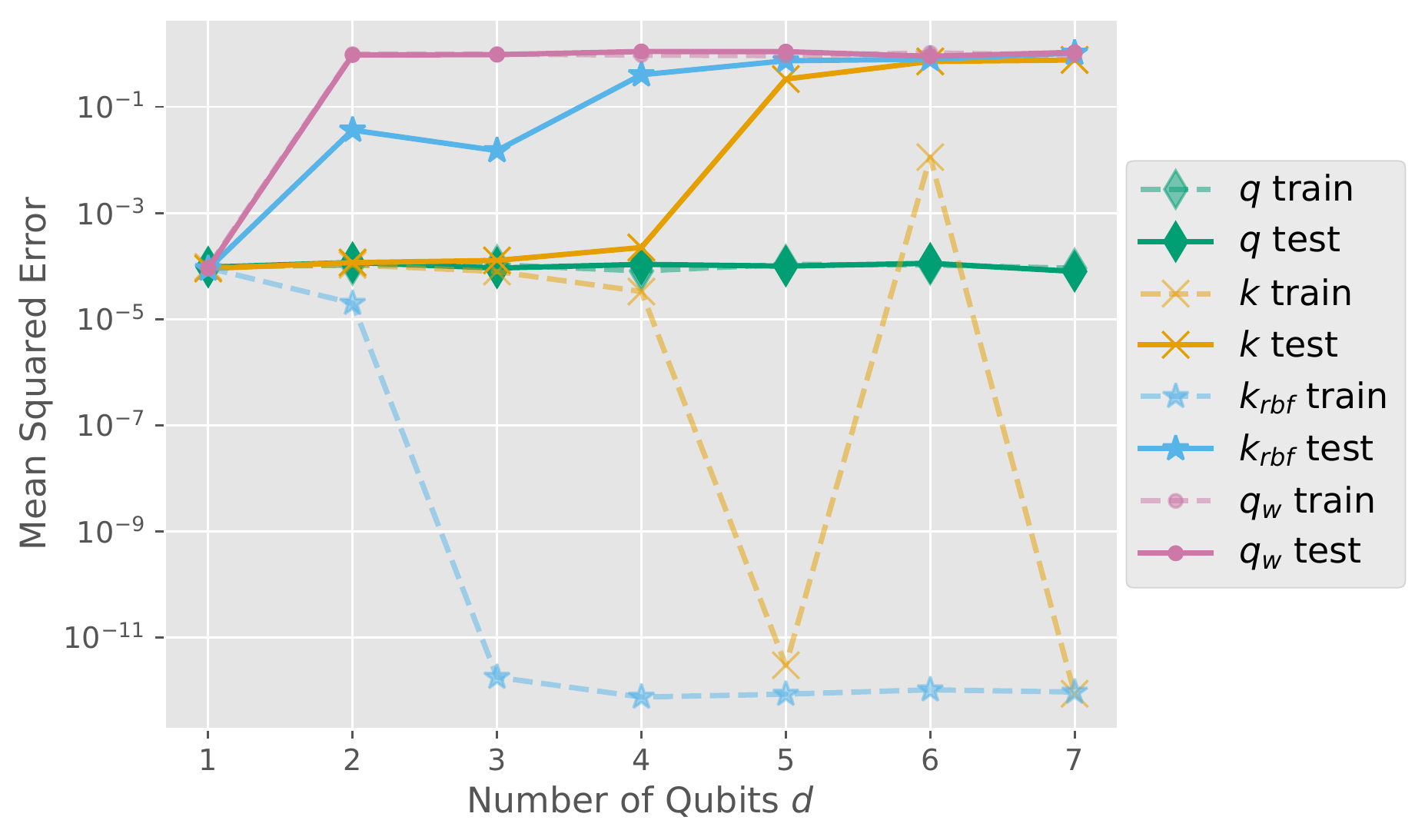}}
  \caption{Similar as in Fig.~\ref{fig:spectrum_generalization}. However, for the full quantum kernel $k$ and the rbf kernel, we compute train and test loss over multiple choices of the regularization parameter. For each number of qubits, we only report the loss of the method that achieved smallest test loss. 
  Note that, although this is invalid to asses the power of the full and rbf kernel, it shows, that the poor performance is not due to the choice of regularization. Since we cherry-pick on the test loss, it can happen that an underfitting regularization has the best test loss, which explains the outlier on $k$ at $d=6$.}
  \label{fig:cherry-picked-regularization}
\end{figure}

For details on the implementation we refer to the provided code.\footnote{\url{https://github.com/jmkuebler/quantumbias}} We emphasize that our experiments simulate the full quantum state and thus work with the true values of the quantum kernel.
This is an idealized setting and neglects the effect of finite measurements. Please see our discussion on Barren Plateaus in the main paper.

To reduce computations and speed-up the simulation, we compute the full quantum kernel $k(x,x') = \prod \cos^2\left( (x_i-x_i')/2\right)$ directly without simulating a quantum circuit. For the biased kernels we recommend (and implement it that way) to completely simulate $\rho_1^V(x_i)$ for all $i=1,\ldots, n$ and store the reduced density matrices ($2\times 2$ hermitian matrices). On a real quantum device this would correspond to doing quantum state tomography \citep{nielsen2010}. 
The benefit of this is that we only need to simulate the quantum circuit $n$ times and can then directly compute the biased kernels via matrix products and tracing. If we chose to compute each entry of the kernel matrix individually we would have to simulate the circuit $n^2$ times.

\paragraph{Random generation of $V$.} In order to generate random unitary matrices $V$ we use the PennyLane function \textsc{RandomLayers} \citep{bergholm2018pennylane}. For $d$ qubits we use $d^2$ layers of single qubit rotations and $2$-qubit entangling gates. For more details and the used seeds please refer to the provided implementation.

\paragraph{Choice of regularization.} For the biased kernels $q,q_w$ regularization does not matter much, since they have only a four-dimensional RKHS and we consider sample sizes much larger than that. The RKHS simply does not have enough capacity to overfit to random noise. We therefore set the regularization $\lambda=0$ for the biased kernels. On the other hand for the higher dimensional kernels $k, k_\text{rbf}$, the regulariyation strongly influences their performance. For the experiment in the main paper we set $\lambda = 10^{-3}$ for the latter methods. Note that in a real application one should use cross-validation or other model selection techniques to find good hyperparameters, which we omitted for simplicity. To exclude that the bad performance of $k$ and $k_\text{rbf}$ stems from a bad choice of regularization, we include experiments where we fit kernel ridge regression for 15 values of $\lambda$ on a logarithmic grid from $10^{-6}$ to $10^4$. We then cherry-pick only the solution that performs best and report it in Figure \ref{fig:cherry-picked-regularization}. Note that such an approach is of course not legit to asses the actual performance. However, it serves to bound the performance for the optimal choice of regularization. Our observations show that the behavior does not significantly change and we conclude that the performance difference indeed comes from the spectral bias as predicted by our theory.

\paragraph{Additional experiments.}
To show how the kernel target alignment changes as we increase the number of qubits $d$, we include further histograms in Figure \ref{fig:histograms}. The estimated kernel alignment correlates with the learning performance reported in Figure \ref{fig:spectrum_generalization}.

\begin{figure}[t]
    \centering
    
    \subfigure{\includegraphics[width=.48\textwidth]{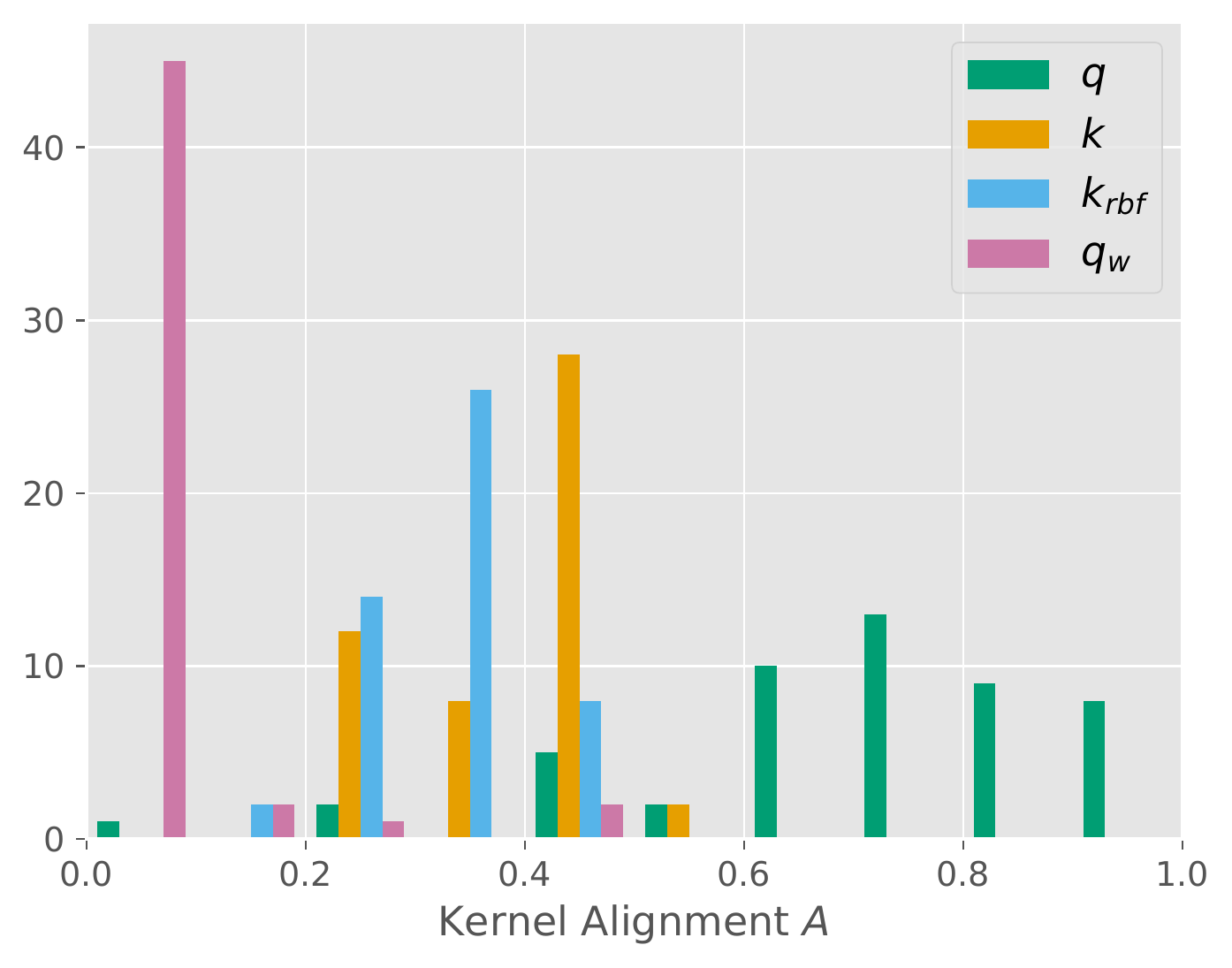}}
    \hfill
    \subfigure{\includegraphics[width=.48\textwidth]{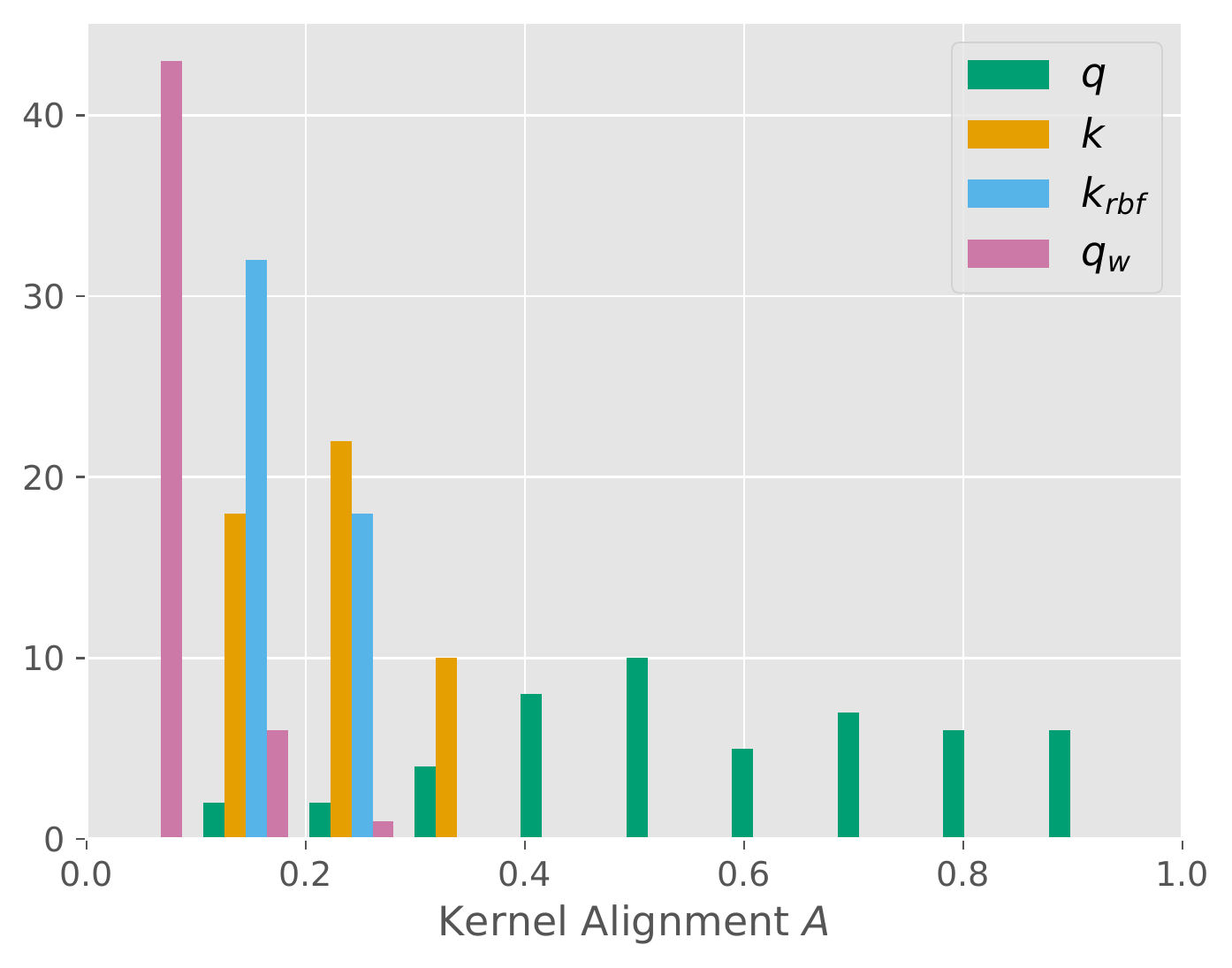}}
    \hfill
    \subfigure{\includegraphics[width=.48\textwidth]{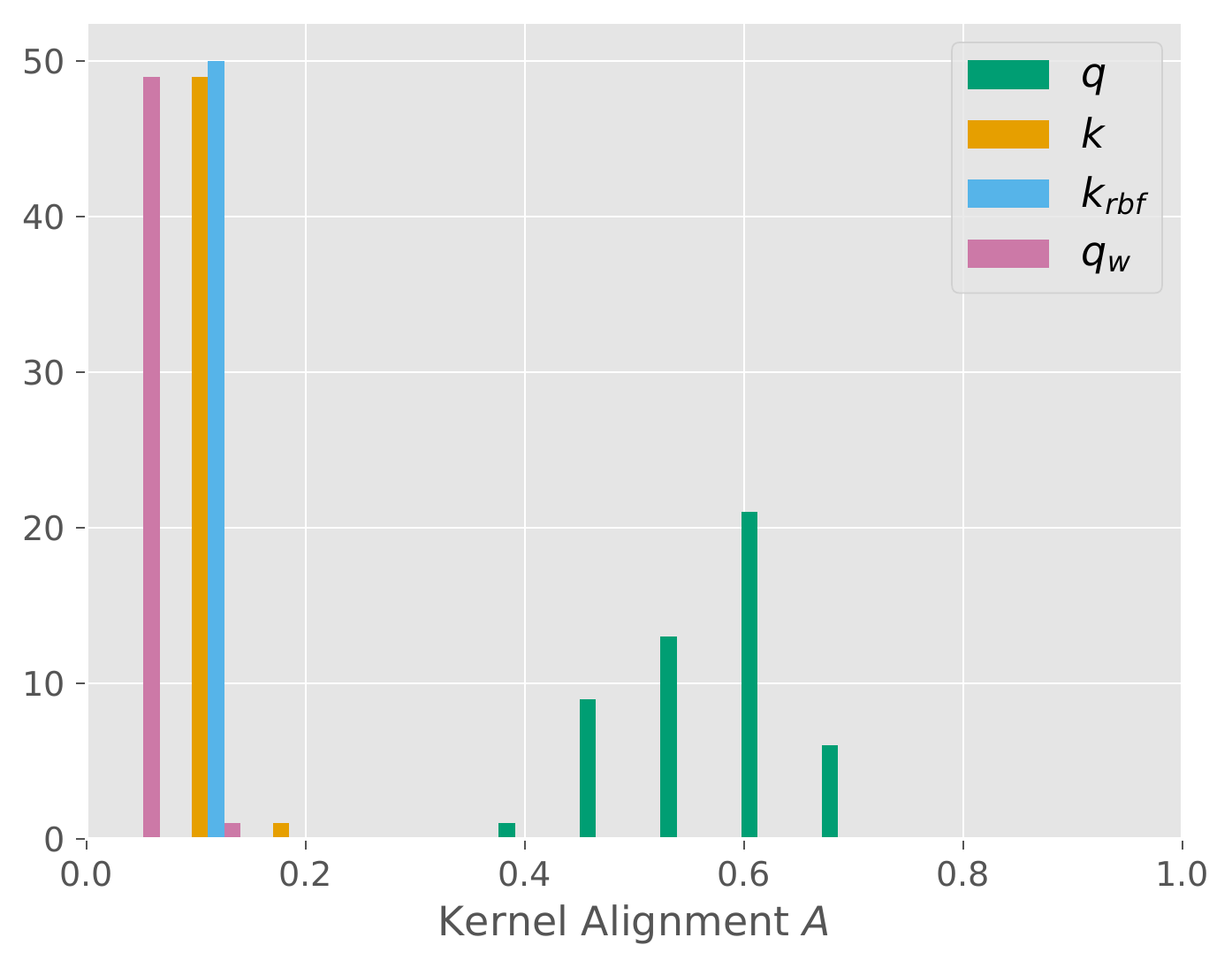}}
    \hfill
    \subfigure{\includegraphics[width=.48\textwidth]{figures/hist_gauss_7_100_50.pdf}}
    \caption{Kernel Target Alignment for $d=1,3,5,7$. }
    \label{fig:histograms}
\end{figure}

\end{document}